\newcommand{\tabincell}[2]{\begin{tabular}{@{}#1@{}}#2\end{tabular}}
\def\widebar{\accentset{{\cc@style\underline{\mskip10mu}}}}
\def\Widebar{\accentset{{\cc@style\underline{\mskip13mu}}}}
\newtheorem{theorem}{Theorem}
\newtheorem{definition}{Definition}
\newtheorem{proposition}{Proposition}
\newtheorem{corollary}{Corollary}
\begin{document}
\captionsetup[figure]{name={Fig.    },labelsep=period}  

\title{Age-optimal Service and Decision Scheduling in Internet of Things}

\author{Zhiwei~Bao,~Yunquan~Dong,~\IEEEmembership{Member,~IEEE},~Zhengchuan~Chen,~\IEEEmembership{Member,~IEEE},
~Pingyi~Fan,~\IEEEmembership{Senior~Member,~IEEE},~and~Khaled~Ben~Letaief,~\IEEEmembership{Fellow,~IEEE}
\\

\thanks{Z. Bao and Y. Dong are with the School of Electronic and Information Engineering,
 Nanjing University of Information Science and Technology, Nanjing, 210044, China (e-mail: zwbao@nuist.edu.cn and yunquandong@nuist.edu.cn).

Z. Chen is with College of Microelectronics and Communication Engineering, Chongqing University, Chongqing 400044, China (e-mail: czc@cqu.edu.cn).

P. Fan is with the Department of Electronic Engineering, the Beijing National Research Center for Information Science and Technology, Tsinghua, University, Beijing 100084, China. (e-mail: fpy@tsinghua.edu.cn).

Khaled B. Letaief is with the Department of Electrical and Computer Engineering, HKUST, Clear Water Bay, Kowloon, Hong Kong (e-mail: eekhaled@ust.hk).}
 }

\maketitle
\thispagestyle{empty}
\pagestyle{empty}

\begin{abstract}
We consider an Internet of Things (IoT) system in which a sensor observes a  phenomena of interest with exponentially distributed intervals and delivers the updates to a monitor with the First-come-First-served (FCFS) policy.
    At the monitor, the received updates are used to make decisions with deterministic or random intervals.
        For this system, we investigate the freshness of the updates at these decision epochs using the age upon decisions (AuD) metric.
Theoretical results show that 1) when the decisions are made with exponentially distributed intervals, the average AuD of the system is smaller if the service time (e.g., transmission time) is uniformly distributed than when it is exponentially distributed, and would be the smallest if it is deterministic;
    2)when the decisions are made periodically, the average AuD of the system is larger than, and decreases with decision rate to, the average AuD of the corresponding system with Poisson decision intervals;
3)the probability of missing to use a received update for any decisions is decreasing with the decision rate, and is the smallest if the service time is deterministic.
    For IoT monitoring systems, therefore, it is suggested to use deterministic monitoring schemes, deterministic transmitting schemes, and Poisson decision schemes, so that the received updates are as fresh as possible at the time they are used to make decisions.
\end{abstract}

\begin{keywords}
Internet of Things (IoT), age upon decisions, age of information, update-and-decide systems.
\end{keywords}
\IEEEpeerreviewmaketitle

\section{Introduction}
The Internet of Things (IoT) technology has been developed rapidly and has been applied in various scenarios in recent years.
    By connecting smart devices (e.g., sensors, radio frequency identification nodes, infrared sensors,  global positioning system based locators) to the internet, IoT enables effective  information exchange and communication among them, which prompts a modern network of intelligence.
In particular, IoT has spawned more and more real-time applications, e.g., smart transportation, health monitoring, intelligent agriculture, smart home, environment monitoring, and so on~\cite{Nelson-2019,Ahmed-2018,Habibzadeh-2020}.
    For these applications, timely information updates are required to control traffic, or to monitor patient, or to manage the farmland.
Therefore, the timeliness of received updates is critical.

It is noted that, however, traditional measures like delay and throughput are not suitable to characterize the timeliness of updates.
    For example, when the delay is small, the received updates may not be fresh if the updates arrive very infrequently; when the throughput is large, the received updates may also not be fresh due to the crowded update queue at the transmitter.
To this end, a new metric termed as age of information (AoI) was proposed in 2011  in \cite{kaul-2011}.
    Specifically, AoI is defined as the elapsed time since the generation of the latest received update, and thus can characterize the freshness of the received information exactly.

 By its definition, it is seen that AoI naturally records the freshness of the latest available information of the receiver for every moment.
    In many applications, however, we may only be interested in the freshness of the received information at some desired instants.
For example, in the dynamic route planning of vehicular networks, the freshness of collected positioning information of cars is mostly concerned at epochs when we are approaching road crosses, i.e., the moments when these information are used to reschedule the route.
    In view of this, the metric age upon decisions (AuD) was proposed to evaluate the freshness of received information at these \textit{decision epochs} \cite{Dong-2018}.
With this special focus on the decision epochs, AuD is thus very suitable for most IoT systems, which are not only platforms of information collecting and exchanging, but also are unattended decision making eco-networks.

\subsection{Motivation}
In the AuD framework, there are three dimensions of optimization, i.e., the arrival process, the service process, and the decision process.
    For example, when the inter-arrival time between neighboring updates is generally distributed, the service time of each update is exponentially distributed, and the decision process is a Poisson process, we refer to the system as a \textit{G/M/1/M update-and-decide system}.
As shown in\cite{Dong-2019}, the average AuDs of such systems are independent of the decision rate and are minimized by the periodic arrival process (i.e., the deterministic process).
    However, how the service process affects the timeliness of the system decisions has not been well understood yet.
In this paper, therefore, we are interested in the timeliness of the update-and-decide systems with different service process.
In particular, we shall investigate the average AuD of M/G/1/M update-and-decide systems with exponentially distributed inter-arrival times, generally distributed service times, and Poisson decisions.
     We shall also investigate the average AuD of M/G/1/D update-and-decide systems, in which decisions are made periodically.
In doing so, we shall provide answer to the question what service process and decision process will optimize the IoT-based update-and-decide systems.

\subsection{Main Contributions}
In this paper, we assume that the arrivals of updates follow a Poisson process and consider three typical service processes, in which the service times are uniformly distributed, exponentially distributed, and deterministic, respectively.
\begin{itemize}

\item For a system with Poisson decisions, we explicitly present the average AuD of the system when the general service process, the Poisson service process, the uniform service process, or the periodic service is used, respectively.
        We also show that the periodic (i.e., deterministic) service process performs better than the uniform service process, and the exponential service process performs the worst, i.e., achieves the maximum average AuD.
\item For a system using the periodic decision process, we also explicitly present the average AuD of the system when the Poisson service process, the uniform service process, or the periodic service is used, respectively.
        For this system, we show that the periodic service process also performs the best and the exponential service process performs the worst, as in the system with Poisson decisions.
    Moreover, we show that is the average AuD is larger than and will decrease with decision rate to the average AuD of the corresponding system with Poisson decisions.
\item We explicitly present the probability for the receiver missing to use a received update for any decision in order to qualify the utilization of the received updates.
        We show that the missing probability is decreasing with the increase in decision rate and would be the smallest if the service process and the decision process are both periodic.
\end{itemize}

\subsection{Organizations}
The rest of the paper is organized as follows.
In Section~\ref{sec:model}, we present the update-and-decide system model and the definition of AuD.
In Section~\ref{sec:mg1}, we investigate the average AuD and the missing probability of  M/G/1/M systems and present the obtained results via numerical and Monte Carlo results, in which Poisson decision processes are used.
In Section~\ref{sec:MG1D}, we investigate the average AuD and the missing probability of  M/G/1/D systems in which periodic decision processes are used.
Finally, we conclude our work in Section~\ref{sec:conclusion}.

\subsection{Related Works}
The age of information has been exhaustively studied in various queueing systems with First-come-First-served (FCFS) discipline, e.g., the M/M/1 system, the M/D/1 system, and the D/M/1 system \cite{skkaul-2012, kam-2013, Skaul-2012}.
    Under the Last-come-First-served (LCFS) service discipline,  M/M/1 queues with and without service preemption were also analyzed in \cite{Yates-2012}.
Due to its specialty in qualifying the freshness of received updates, AoI has also been widely applied to IoT-based monitoring systems~\cite{Ygu-2019, jiang-2019,wang-2019}.
    In \cite{Ygu-2019}, the authors characterized the age-energy tradeoff of IoT systems by minimizing the average AoI with a given long-term average transmit power constraint and a practical truncated automatic repeat request scheme.
In \cite{jiang-2019}, the optimal policy minimizing the time-average AoI is proved to be a round-robin policy with one-packet buffers (RR-ONE).
    For the resource constrained industrial IoT networks, it was proved that the optimal stationary policy is a randomized mixture of two deterministic monotone policies \cite{wang-2019}.
     For a two-way data exchanging system with an access point and an energy-harvesting powered smart device, the timeliness limit and efficiency limit were explicitly presented in \cite{Ydong-2019} and \cite{Hu-2019}.

Moreover, there were several other information freshness measures proposed  recently \cite{Antzela-2018, Kosta-2017}.
    In \cite{Antzela-2018}, the authors characterized the cost of information staleness using the cost of update delay (CoUD).
In \cite{Kosta-2017}, the authors introduced a metric called value of information of update (VoIU) to capture the degree of importance of received information at the destination.
    More relevantly, AuD was proposed to measure the freshness of  received updates at the epochs of interest in \cite{Dong-2018}; the AuD minimizing arrival process of G/M/1/M systems was investigated in \cite{Dong-IoT-2020, Dong-2019}.
Since AuD is very applicable to many real-time IoT systems, we shall consider the timeliness of received updates of IoT systems within the AuD framework in this paper.
    Specifically,  we shall minimize the average AuD by scheduling the service process and decision process.

\section{System Model}\label{sec:model}
We consider an FCFS update-and-decide system with arrival rate $\lambda$, service rate $\mu$ and decision rate $\nu$.
    We denote the \textit{server utilization} of the system as $\rho=\frac{\lambda}{\mu}$ and assume that $\rho$ is smaller than unity so that the queue would be stable.

As shown in Fig. \ref{fig:AOI}, we denote the arrival time and the departure time of the $k$-th received update, as $t_{k}$ and $t_{k}'$ ($k$=1,2,...), respectively.
   We denote the inter-arrival time between the arrival of update $k$ and $k-1$ as $X_{k}=t_{k}-t_{k-1}$.
We also denote the system time of the $k$-th update as $T_{k}=W_{k}+S_{k}$, in which $W_{k}$ and $S_{k}$ are the waiting time and the service time of update $k$, respectively.
   During inter-departure time $Y_{k}=t_{k}'-t_{k-1}'$, we denote the number of decisions as $N_{k}$ and the epochs at which the $j$-th decisions is made as $\tau_{k_{j}}$, in which $j=1,2,...,N_{k}$.
The inter-decision time between neighboring decisions can then be expressed as $Z_{j}=\tau_{j}-\tau_{j-1}$.
    By making a decision, we mean that the received update is used to decode, to make an inference.

We consider an M/G/1/G update-and-decide system, in which the updates are generated according to a Poisson process with rate $\lambda$, the service time is generally distributed with an average of $1/\mu$, and the inter-decision time $Z_{j}$ is generally distributed with average $1/\nu$.

\begin{definition}
   \textit{(Age upon Decisions-AuD)}\cite{Dong-2018}
   At the $j$-th decision epoch $\tau_j$, the index of the most recently received update is
   \begin{align}\label{eq:tau}
   N_{\text{U}}(\tau_{j})=\text{max}\{k|t_{k}'\leq\tau_{j}\},
   \end{align}
   and the generation time of the update is
   \begin{align}\label{eq:Utau}
   U(\tau_{j})=t_{N_{\text{U}}(\tau_{j})}.
   \end{align}
   We denote the AuD of the update-and-decide system as the following random process
   \begin{align}\label{eq:delta}
   \Delta_{\text{D}}(\tau_{j})=\tau_{j}-U(\tau_{j}).
   \end{align}
\end{definition}

 Compared with AoI which evaluates information freshness at every moment $t$, i.e., $\Delta_{\text{A}}(t)=t-u(t)$, it is clear that $\Delta_{\text{D}}(\tau_{j})$ focuses only on the freshness of information at decision epoches.
   In addition, we see that AuD reduces to AoI if we replace the decision epoch $\tau_{j}$ with an arbitrary epoch $t$.

   Fig.     \ref{fig:AOI} presents an example of the arrival process, the service process, and the decision process for the evaluation of AuD.
   On one hand, in case that the inter-arrival time is relatively large and we have $X_{k}>T_{k-1}$ (e.g., $X_{3}>T_{2}$), the next update $k$ has not arrived yet by the departure time of current update $k-1$.
Therefore, the inter-departure time can be expressed as $Y_{k}=X_{k}+S_{k}-T_{k-1}$ (e.g., $Y_3=X_3+S_3-T_2$).
   On the other hand, if the inter-arrival time is relatively small and we have $X_{k}<T_{k-1}$ (e.g., $X_{2}<T_{1}$), the newly arrived update needs to wait for some time $W_k$ for its service.
In this case,  we have $T_{k}=W_k+S_{k}$ and $Y_{k}=S_{k}$ (e.g., $Y_2=S_2$).
It is also noted that there may exist several decisions during an inter-departure time.

Suppose there are $N_{T}$ decisions made during a period of $T$, the average AuD can be given by
   \begin{align}\label{eq:deltaD}
   \widebar{\Delta}_{\text{D}}=\lim_{T\rightarrow\infty}\frac{1}{ N_{\text{T}}} \sum_{j=1}^{N_{\textrm{T}}} \Delta_{\text{D}}(\tau_{j}).
   \end{align}

\begin{figure}
  \centering
  \includegraphics[width=3.0in]{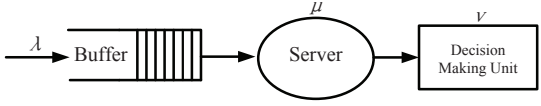}\\
  \caption{The update-and-decide system model}\label{fig:system}
\end{figure}
\begin{figure}
  \centering
  \includegraphics[width=3.3in]{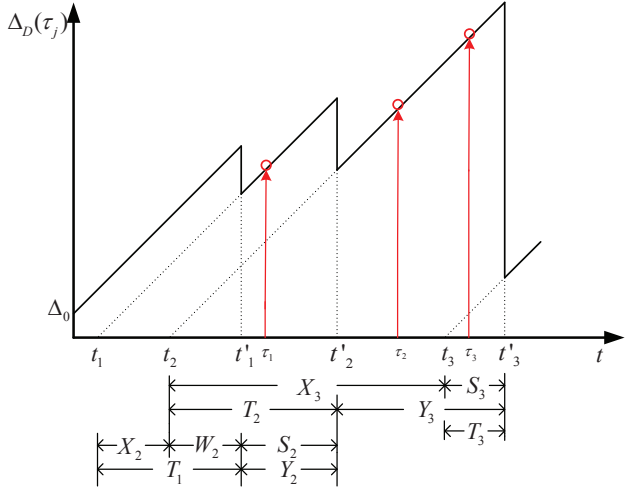}\\
  \caption{Age upon decisions}\label{fig:AOI}
\end{figure}

\section{Average AuD With Random Decisions}\label{sec:mg1}
In this section, we investigate the average AuD of an  \textit{M/G/1/M update-and-decide system} with FCFS discipline, in which the inter-arrival time is exponentially distributed with mean $1/\lambda$, the service time is generally distributed with mean $1/\mu$, and the inter-decision time is exponentially distributed with mean $1/\nu$.
    We shall evaluate the average AuD of the system under three different service time distributions, i.e., the uniform distribution, the exponential distribution, and the deterministic case.
    Also,  we shall discuss how the missing probability $p_\text{mis}$ vary with decision rate $\nu$.

\subsection{Average AuD of M/G/1/M Systems}\label{sec:mg1m}
We denote the probability density function ({pdf}) of the inter-arrival time as $f_\text{X}(x)$  and the {pdf} of the service time as $f_\text{S}(x)$.
    For an M/G/1/M system, we have  $f_\text{X}(x)=\lambda e^{-\lambda x}$.

As shown in \cite{Dong-2019}, the average AuD of a G/G/1/M update-and-decide system is given by
\begin{align}\label{eq:deltaAUD}
\widebar{\Delta}_{\text{D}}=\frac{\mathbb{E}[Y_{k}^{2}]+
2\mathbb{E}[T_{k-1}Y_{k}]}{2\mathbb{E}[Y_{k}]}.
\end{align}
Based on this result, we have the following theorem.

\begin{theorem}\label{th:MG1}
For an FCFS based M/G/1/M update-and-decide system, the average AuD is given by
\begin{align}\label{eq:deltaMG1}
\widebar{\Delta}_\text{D}^\text{M/G/1/M}=\frac{\lambda^{2}\mathbb{E}[S_{k}^{2}]+2(1-\rho)(1-\lambda\omega)}{2\lambda(1-\rho)},
\end{align}
\end{theorem}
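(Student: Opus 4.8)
The starting point is the G/G/1/M expression~\eqref{eq:deltaAUD}, so the task reduces to evaluating $\mathbb{E}[Y_k]$, $\mathbb{E}[Y_k^{2}]$ and $\mathbb{E}[T_{k-1}Y_k]$ for the M/G/1 queue and then simplifying. First I would fix a sample-path identity for the inter-departure time. As noted in the discussion of Fig.~\ref{fig:AOI}, update $k$ contributes $Y_k=S_k$ when it arrives to a busy server (i.e.\ $X_k\le T_{k-1}$) and $Y_k=X_k-T_{k-1}+S_k$ when it arrives to an idle one ($X_k>T_{k-1}$), so in every case $Y_k=S_k+(X_k-T_{k-1})^{+}$; the same identity also follows from $Y_k=X_k+W_k+S_k-T_{k-1}$ and the Lindley recursion $W_k=(T_{k-1}-X_k)^{+}$. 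Two independence facts will be used throughout: the service time $S_k$ is independent of the pair $(X_k,T_{k-1})$, and---because the arrivals are Poisson---the fresh inter-arrival $X_k\sim\mathrm{Exp}(\lambda)$ is independent of $T_{k-1}$.

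Next I would evaluate the three moments by conditioning on $T_{k-1}$ and using the lack of memory of $X_k$: given $T_{k-1}$, the overshoot $(X_k-T_{k-1})^{+}$ equals $0$ with probability $1-e^{-\lambda T_{k-1}}$ and is $\mathrm{Exp}(\lambda)$ otherwise, so $\mathbb{E}[(X_k-T_{k-1})^{+}\mid T_{k-1}]=\tfrac{1}{\lambda}e^{-\lambda T_{k-1}}$ and $\mathbb{E}[((X_k-T_{k-1})^{+})^{2}\mid T_{k-1}]=\tfrac{2}{\lambda^{2}}e^{-\lambda T_{k-1}}$. Expanding $Y_k=S_k+(X_k-T_{k-1})^{+}$ and using $S_k\perp(X_k,T_{k-1})$ then expresses $\mathbb{E}[Y_k]$, $\mathbb{E}[Y_k^{2}]$ and $\mathbb{E}[T_{k-1}Y_k]$ through $\mathbb{E}[S_k]$, $\mathbb{E}[S_k^{2}]$, $\mathbb{E}[T_{k-1}]$, $\mathbb{E}[e^{-\lambda T_{k-1}}]$ and $\mathbb{E}[T_{k-1}e^{-\lambda T_{k-1}}]$. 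The decisive queueing input is that $\{X_k>T_{k-1}\}$ is precisely the event that update $k$ arrives to an empty system, so by PASTA $\mathbb{E}[e^{-\lambda T_{k-1}}]$ equals the stationary empty-system probability $1-\rho$; in particular $\mathbb{E}[Y_k]=1/\lambda$ (as it must be, the departure rate being $\lambda$), which reduces~\eqref{eq:deltaAUD} to $\widebar{\Delta}_\text{D}=\tfrac{\lambda}{2}\bigl(\mathbb{E}[Y_k^{2}]+2\mathbb{E}[T_{k-1}Y_k]\bigr)$. Using the Pollaczek--Khinchine mean-value formula for $\mathbb{E}[T_{k-1}]$, the $\mathbb{E}[T_{k-1}]$ term merges with the $\mathbb{E}[S_k^{2}]$ piece and the stray $1/\mu$ contributions cancel, leaving, over the common denominator $2\lambda(1-\rho)$, the numerator $\lambda^{2}\mathbb{E}[S_k^{2}]$ plus a $(1-\rho)$ term carried by $\mathbb{E}[T_{k-1}e^{-\lambda T_{k-1}}]$; writing that leftover moment through $\omega$ gives~\eqref{eq:deltaMG1}.

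The step I expect to be the genuine obstacle is this last moment, $\mathbb{E}[T_{k-1}e^{-\lambda T_{k-1}}]$: unlike the other pieces it is not fixed by $\mathbb{E}[S_k^{2}]$ alone but by the whole service-time law, so it must be pulled out of the Laplace--Stieltjes transform of the M/G/1 sojourn time, $\widetilde{T}(s)=\widetilde{S}(s)\,\widetilde{W}(s)$ with $\widetilde{W}(s)=\tfrac{(1-\rho)s}{s-\lambda+\lambda\widetilde{S}(s)}$. Since $\mathbb{E}[T_{k-1}e^{-\lambda T_{k-1}}]=-\widetilde{T}'(\lambda)$, I would differentiate this product and evaluate at $s=\lambda$; the $\widetilde{S}'(\lambda)$ terms drop out and the expression collapses to the compact form $\mathbb{E}[T_{k-1}e^{-\lambda T_{k-1}}]=\dfrac{(1-\rho)\bigl(1-\mathbb{E}[e^{-\lambda S_k}]\bigr)}{\lambda\,\mathbb{E}[e^{-\lambda S_k}]}$. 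This is exactly the quantity the parameter $\omega$ records, and it is what one then specializes to the uniform, exponential, and deterministic service distributions in the remainder of the section.
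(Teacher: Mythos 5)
Your proposal is correct and takes essentially the same route as the paper's own proof: the same decomposition $Y_k=S_k+(X_k-T_{k-1})^{+}$, evaluation of $\mathbb{E}[Y_k]$, $\mathbb{E}[Y_k^2]$ and $\mathbb{E}[T_{k-1}Y_k]$ via the memoryless overshoot, the Pollaczek--Khinchine mean and the M/G/1 sojourn-time transform, with the cross-moment $\mathbb{E}[T_{k-1}e^{-\lambda T_{k-1}}]$ absorbed into $\omega$. The only cosmetic differences are that you invoke PASTA where the paper evaluates $G_{\text{T}}(-\lambda)=1-\rho$ directly from the transform, and that you carry out the differentiation explicitly; your value $\mathbb{E}[T_{k-1}e^{-\lambda T_{k-1}}]=\frac{(1-\rho)\left(1-G_{\text{S}}(-\lambda)\right)}{\lambda G_{\text{S}}(-\lambda)}=-\omega$ agrees with the paper's explicit formula for $\omega$ (the paper's stated derivative convention for $\omega$ carries an inconsequential sign slip).
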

where $\omega=\frac{\text{d}G_\text{T}(s)}{\text{d}s}|_{s=-\lambda}=\frac{(1-\rho)(G_\text{S}(-\lambda)-1)}{\lambda G_\text{S}(-\lambda)}$, $G_\text{T}(s)=\mathbb{E}[e^{sT_k}]$ and $G_\text{S}(s)=\mathbb{E}[e^{sS_k}]$ are the moment generating functions (MGF) of system time $T_{k}$ and  service time $S_k$, respectively.

\begin{proof}
To prove \textit{Theorem} \ref{th:MG1}, we need to obtain $\mathbb{E}[Y_{k}^{2}]$, $\mathbb{E}[Y_{k}]$, and $\mathbb{E}[T_{k-1}Y_{k}]$ first.
    Since $Y_{k}=X_{k}+S_{k}-T_{k-1}$ if $X_k>T_{k-1}$ and $Y_{k}=S_{k}$ if $X_k<T_{k-1}$, all of the three quantities can be obtained readily when the probability $\Pr\{X_k<T_{k-1}\}$ has been obtained.
For more details, refer to Appendix~\ref{prop:MG1}.
\end{proof}

\subsection{M/G/1/M System under Different Service Processes}\label{sec:ued}
With mean service time ${1}/{\mu}$, we denote the pdfs of the uniform distribution, the exponential distribution, and the deterministic distribution as $f_{\text{SU}}(x)$, $f_{\text{SE}}(x)$, and $f_{\text{SD}}(x)$.
In particular, we have $f_{\text{SU}}(x)=\frac{\mu}{2}$ for $x\in(0, \mu)$, $f_{\text{SE}}(x)=\mu e^{-\mu x}$ for $x>0$, and $f_{\text{SD}}(x)=\delta\left(x-\frac{1}{\mu}\right)$ for $x>0$, where $\delta(x)$ is the Dirichlet function.
Moreover, the corresponding update-and-decide system is denoted as the M/U/1/M system, the M/M/1/M system, and the M/D/1/M system, respectively.
    Based on \textit{Theorem} \ref{th:MG1}, the average AuD of these update-and-decide systems can then be calculated readily.

\begin{corollary}\label{cor:MG1M}
In M/U/1/M, M/M/1/M, M/D/1/M update-and -decide systems with arrival rate $\lambda$, service rate $\mu$, and Poisson decisions of rate $\nu$, the average AuDs are given by
\begin{align}
\widebar{\Delta}_{\text{D}}^{\text{M/U/1/M}}=&\frac{\rho(6\rho^{2}e^{2\rho}-13\rho e^{2\rho}+9e^{2\rho}+\rho-3)}{3\lambda(1-\rho)(e^{2\rho}-1)},\nonumber \\
\widebar{\Delta}_{\text{D}}^{\text{M/M/1/M}}=&\frac{\rho^{3}-\rho^{2}+1}{\lambda(1-\rho)},\nonumber \\
\widebar{\Delta}_{\text{D}}^{\text{M/D/1/M}}=&\frac{\rho^{2}+2(1-\rho)(\rho+e^{\rho}-\rho e^{\rho})}{2\lambda(1-\rho)}.\nonumber
\end{align}
\end{corollary}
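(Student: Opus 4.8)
The plan is to specialize Theorem~\ref{th:MG1} to each of the three service-time laws. Since~\eqref{eq:deltaMG1} depends on the service process only through the second moment $\mathbb{E}[S_k^2]$ and the value $G_\text{S}(-\lambda)$ of the MGF at $s=-\lambda$ (which in turn determines $\omega$), the whole proof reduces to computing these two quantities for the uniform, exponential, and deterministic distributions and then simplifying. Throughout I would express everything in terms of $\rho=\lambda/\mu$ so that the final forms come out in the stated shape.

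I would start with the exponential case, which is cleanest. From $f_\text{SE}(x)=\mu e^{-\mu x}$ we get $\mathbb{E}[S_k^2]=2/\mu^2$, so $\lambda^2\mathbb{E}[S_k^2]=2\rho^2$, and $G_\text{S}(-\lambda)=\mu/(\mu+\lambda)=1/(1+\rho)$; substituting the latter into the expression for $\omega$ gives $\lambda\omega=-\rho(1-\rho)$, hence $1-\lambda\omega=1+\rho-\rho^2$. Plugging these into~\eqref{eq:deltaMG1} and collecting terms, the numerator becomes $2(\rho^3-\rho^2+1)$, which yields $\widebar{\Delta}_{\text{D}}^{\text{M/M/1/M}}=(\rho^3-\rho^2+1)/\big(\lambda(1-\rho)\big)$. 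For the deterministic case, $f_\text{SD}(x)=\delta(x-1/\mu)$ gives $\mathbb{E}[S_k^2]=1/\mu^2$, so $\lambda^2\mathbb{E}[S_k^2]=\rho^2$, and $G_\text{S}(-\lambda)=e^{-\rho}$, whence $\lambda\omega=(1-\rho)(1-e^{\rho})$ and $1-\lambda\omega=\rho+e^{\rho}-\rho e^{\rho}$; substituting into~\eqref{eq:deltaMG1} reproduces the stated $\widebar{\Delta}_{\text{D}}^{\text{M/D/1/M}}$ essentially without further manipulation.

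The uniform case requires the most care. To have mean $1/\mu$, the density $f_\text{SU}(x)=\mu/2$ must be supported on $(0,2/\mu)$; then $\mathbb{E}[S_k^2]=4/(3\mu^2)$, so $\lambda^2\mathbb{E}[S_k^2]=4\rho^2/3$, and $G_\text{S}(-\lambda)=\mu(1-e^{-2\rho})/(2\lambda)=(1-e^{-2\rho})/(2\rho)$. Substituting the latter into $\omega$ and simplifying gives $1-\lambda\omega=\rho(3-2\rho-e^{-2\rho})/(1-e^{-2\rho})$. I would then put the two pieces of the numerator of~\eqref{eq:deltaMG1} over the common denominator $3(1-e^{-2\rho})$, expand $6\rho(1-\rho)(3-2\rho-e^{-2\rho})$, combine it with $4\rho^2(1-e^{-2\rho})$, and factor $2\rho$ out of the result; multiplying numerator and denominator by $e^{2\rho}$ to clear the negative exponentials and dividing by $2\lambda(1-\rho)$ then gives the stated closed form.

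I expect the only real obstacle to be the bookkeeping in this last step: the cross terms in $\rho e^{-2\rho}$ and $\rho^2 e^{-2\rho}$ coming from the two pieces of the numerator must be combined correctly, and it is easy to drop a sign or a factor of $e^{2\rho}$ during the final rescaling. The exponential and deterministic cases, by contrast, are short and mechanical once $G_\text{S}(-\lambda)$ is in hand, so the corollary follows directly from Theorem~\ref{th:MG1}.
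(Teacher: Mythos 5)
Your proposal is correct and follows essentially the same route as the paper's proof: specialize Theorem~\ref{th:MG1} by computing $\mathbb{E}[S_k^2]$ and $G_\text{S}(-\lambda)$ (hence $\omega$) for each service law and simplifying in terms of $\rho$, and your intermediate values ($1-\lambda\omega=1+\rho-\rho^2$, $\rho+e^{\rho}-\rho e^{\rho}$, and $\rho(3-2\rho-e^{-2\rho})/(1-e^{-2\rho})$, respectively) all check out and reproduce the three stated closed forms. You also correctly read the uniform density as supported on $(0,2/\mu)$, which is what the paper's own computations use despite the typo in its statement of the support.
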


\begin{proof}
By combing the results in \textit{Theorem} \ref{th:MG1}, the \textit{Corollary} \ref{cor:MG1M} can readily be proved. See Appendix~\ref{prop:MG1M}.
\end{proof}

Based on \textit{Corollary} \ref{cor:MG1M}, we then have the following theorem.
\begin{theorem} \label{MGj}
For FCFS based M/G/1/M update-and-decide systems with the common arrival rate $\lambda$ and service rate $\mu$, with uniformly distributed, exponentially distributed, and deterministic service time, respectively, we have
\begin{align}
\widebar{\Delta}_{\text{D}}^{\text{M/D/1/M}}<\widebar{\Delta}_{\text{D}}^{\text{M/U/1/M}}<\widebar{\Delta}_{\text{D}}^{\text{M/M/1/M}},
\end{align}
which is independent of the decision rate $\nu$.
\end{theorem}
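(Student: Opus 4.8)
\emph{Proof strategy.} The plan is to argue directly from the three closed forms in \textit{Corollary}~\ref{cor:MG1M}. As a convenient preliminary, substituting $\omega=\frac{(1-\rho)(G_\text{S}(-\lambda)-1)}{\lambda G_\text{S}(-\lambda)}$ into \eqref{eq:deltaMG1} and simplifying recasts the general formula as
\begin{align}\label{eq:rewrite}
\widebar{\Delta}_\text{D}^\text{M/G/1/M}=\frac{\lambda\,\mathbb{E}[S_k^2]}{2(1-\rho)}+\frac{\rho}{\lambda}+\frac{1-\rho}{\lambda\,G_\text{S}(-\lambda)}.
\end{align}
For fixed $\lambda$ and $\rho=\lambda/\mu$, the average AuD therefore depends on the service law only through the second moment $\mathbb{E}[S_k^2]$ and through $G_\text{S}(-\lambda)=\mathbb{E}[e^{-\lambda S_k}]$; in particular, plugging in the uniform, exponential and deterministic laws yields expressions in $\rho$ and $\lambda$ alone, none of which contains $\nu$, which already settles the asserted independence of the decision rate. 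Equation~\eqref{eq:rewrite} also shows why no purely ``soft'' comparison will do: both $x\mapsto x^2$ and $x\mapsto e^{-\lambda x}$ are convex, so the three service times are ordered in the convex order, but the first summand of \eqref{eq:rewrite} increases along that order while the last summand decreases, so the two effects pull in opposite directions and the closed forms must be compared explicitly.

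The second step is to reduce each of the two required inequalities to a statement about a single variable $\rho\in(0,1)$. Writing every average AuD over the common positive factor $\frac{1}{\lambda(1-\rho)}$ and, wherever the uniform law appears, clearing the positive factor $e^{2\rho}-1$, the chain $\widebar{\Delta}_\text{D}^\text{M/D/1/M}<\widebar{\Delta}_\text{D}^\text{M/U/1/M}<\widebar{\Delta}_\text{D}^\text{M/M/1/M}$ becomes two inequalities among $e^{\rho}$, $e^{2\rho}$ and polynomials in $\rho$ of low degree. After rearrangement, both can be written in terms of the single auxiliary function $\phi(\rho)=\frac{2\rho}{1-e^{-2\rho}}=\frac{2\rho e^{2\rho}}{e^{2\rho}-1}$: roughly, ``$\phi(\rho)$ is not too large'' for $\widebar{\Delta}_\text{D}^\text{M/U/1/M}<\widebar{\Delta}_\text{D}^\text{M/M/1/M}$, and ``$e^{\rho}-\phi(\rho)$ is not too large'' for $\widebar{\Delta}_\text{D}^\text{M/D/1/M}<\widebar{\Delta}_\text{D}^\text{M/U/1/M}$. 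I would then expand the exponentials as Maclaurin series; each difference function vanishes together with a couple of its first derivatives at $\rho=0$, so it remains to check that its leading non-zero coefficient and its remaining series coefficients are positive on $(0,1)$.

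The delicate point, which I expect to be the real obstacle, is controlling $\phi(\rho)$: the Maclaurin coefficients of $\frac{x}{1-e^{-x}}=1+\frac{x}{2}+\frac{x^2}{12}-\frac{x^4}{720}+\cdots$ are, up to scaling, Bernoulli numbers and are \emph{not} of constant sign, so a naive termwise majorization fails. My plan is to use the alternating pattern of its higher-order (even) part --- the successive terms being of rapidly decreasing magnitude for $\rho\in(0,1)$ --- to obtain the polynomial sandwich $1+\rho+\frac{\rho^2}{3}-\frac{\rho^4}{45}<\phi(\rho)<1+\rho+\frac{\rho^2}{3}$ on $(0,1)$. Given these bounds the remainder is bookkeeping: $\widebar{\Delta}_\text{D}^\text{M/U/1/M}<\widebar{\Delta}_\text{D}^\text{M/M/1/M}$ collapses to the trivial $\frac{\rho^2}{3}<\frac{\rho^2}{3(1-\rho)^2}$, while $\widebar{\Delta}_\text{D}^\text{M/D/1/M}<\widebar{\Delta}_\text{D}^\text{M/U/1/M}$ collapses to an elementary comparison between $e^{\rho}-1-\rho-\frac{\rho^2}{3}+\frac{\rho^4}{45}$ and $\frac{\rho^2}{6(1-\rho)^2}$, which holds coefficientwise with room to spare for $\rho\in(0,1)$. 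The weaker statement $\widebar{\Delta}_\text{D}^\text{M/D/1/M}<\widebar{\Delta}_\text{D}^\text{M/M/1/M}$ then follows by transitivity (or directly, since it pits $e^{\rho}$ against a rational function), which completes the ordering; by the first paragraph it holds for all $\rho\in(0,1)$ irrespective of $\nu$.
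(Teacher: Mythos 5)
Your proposal is correct, but it takes a genuinely different route from the paper's own argument in Appendix~\ref{proof:ued}. The paper establishes two separate orderings --- $\mathbb{E}[S^2_{\text{E}k}]>\mathbb{E}[S^2_{\text{U}k}]>\mathbb{E}[S^2_{\text{D}k}]$ and, via the auxiliary functions $f_1,f_2$ and elementary derivative arguments, $G_{\text{SE}}(-\lambda)>G_{\text{SU}}(-\lambda)>G_{\text{SD}}(-\lambda)$ --- and then reads the chain of AuDs directly off \eqref{eq:deltaMG1} by a monotonicity remark. As your decomposition of \eqref{eq:deltaMG1} makes explicit, these two statistics enter with opposite signs: a larger $\mathbb{E}[S_k^2]$ raises the average AuD while a larger $G_\text{S}(-\lambda)$ (equivalently a larger $\omega$) lowers it, and the three service laws are ranked the same way by both. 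So the qualitative facts alone do not settle the comparison, and the quantitative closed-form comparison you carry out via \textit{Corollary}~\ref{cor:MG1M} is exactly what is needed to close that step. Your reductions check out: with $\phi(\rho)=2\rho/(1-e^{-2\rho})=1/G_{\text{SU}}(-\lambda)$, the inequality between the uniform and exponential cases is equivalent to $\phi(\rho)-1-\rho<\rho^2/\bigl(3(1-\rho)^2\bigr)$ and the deterministic-versus-uniform one to $e^{\rho}-\phi(\rho)<\rho^2/\bigl(6(1-\rho)^2\bigr)$; your sandwich $1+\rho+\rho^2/3-\rho^4/45<\phi(\rho)<1+\rho+\rho^2/3$ is valid on $(0,1)$ because the Bernoulli-type terms of $x/(1-e^{-x})$ at $x=2\rho$ alternate with decreasing magnitude for $x<2\pi$; and the remaining coefficientwise comparisons hold (the $\rho^2$ coefficients tie, with strictness supplied from the cubic term onward and from the strictness of the sandwich). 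What the paper's approach buys is brevity and the structural observation that only $\mathbb{E}[S_k^2]$ and $G_\text{S}(-\lambda)$ matter; what yours buys is a self-contained, fully quantitative verification of the ordering that does not rely on that concluding monotonicity shortcut, at the cost of some series bookkeeping. The independence of $\nu$ is immediate in both treatments, since none of the expressions in \textit{Corollary}~\ref{cor:MG1M} involves $\nu$. To turn your outline into a complete proof you only need to write out the alternating-series justification of the sandwich and the two explicit coefficient comparisons.
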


\begin{proof}
See  Appendix~\ref{proof:ued}.
\end{proof}

From \textit{Theorem} \ref{MGj}, we observe that for the given common arrival rate $\lambda$ and common service rate $\mu$, the deterministic service process outperforms the uniform service process, and the uniform service process outperforms the exponential service process.
    Thus, we prefer deterministic services to random services in M/G/1/M update-and-decide systems.

\begin{figure}[htp]
  \hspace{-6 mm}
  \begin{tabular}{cc}
  \subfigure[Average AuD versus arrival rate $\lambda$, where $\mu=1.5$]
  {
  \begin{minipage}[t]{0.5\textwidth}
  \centering
  {\includegraphics[width = 3.7in] {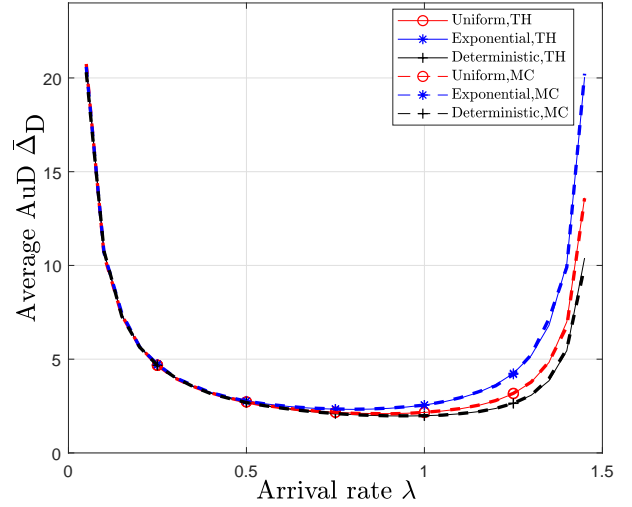} \label{fig:lambath}}
  \end{minipage}
  }\\

  \subfigure[Average AuD versus service rate $\mu$, where $\lambda=0.5$]
  {
  \begin{minipage}[t]{0.5\textwidth}
  \centering
  {\includegraphics[width = 3.7in] {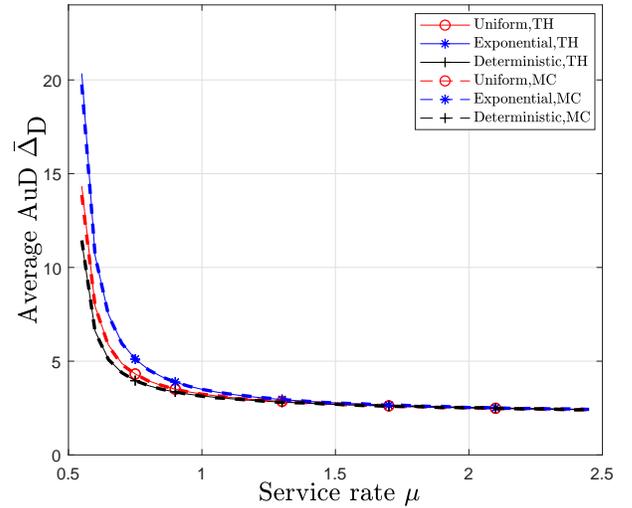} \label{fig:muth}}
  \end{minipage}
  }
  \end{tabular}
\caption{Average AuD of the systems} \label{fig:aud}
\end{figure}
\begin{figure}[!t]
  \centering
  \includegraphics[width=3.7in]{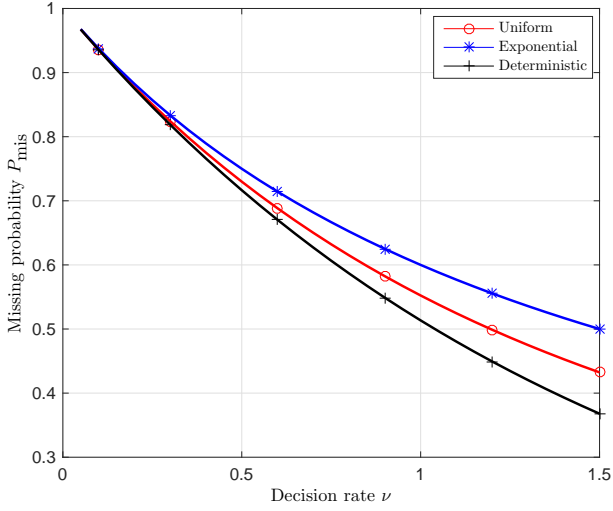}\\
  \caption{Missing probability $P_{\text{mis}}$}
  \label{fig:pmis}
\end{figure}

\subsection{Missing Probability of Updates}
As shown in \cite{Dong-2019}, although we cannot reduce the average AuD of G/G/1/M update-and-decide systems by increasing the decision rate, we do can reduce the probability of missing to use the received updates by increasing the decision rate.
    To be specific, fewer updates will be missed for making decisions when the decision rate is increased.

We define the missing probability $p_{\text{mis}}$ as the limiting ratio between the number of updates missed for decisions and number of totally received updates.
    Since the number $N_{k}$ of decisions is zero if  there is no decision made during inter-departure time $Y_{k}$, $p_{\text{mis}}$ would  be equal to probability $\Pr\{N_k=0\}$.
Thus, the missing probability $p_{\text{mis}}$ can readily be obtained by taking the expectation over $Y_{k}$, as shown in the following theorem.

\begin{theorem} \label{pmis}
In an M/G/1/M update-and-decide system, the missing probability $p_{\text{mis}}$ of updates is given by
\begin{align}\label{eq:pmis}
p_{\text{mis}}=G_{\text{S}}(-\nu) \frac{\rho \nu+\lambda}{\lambda+\nu},
\end{align}
in which $G_{\text{S}}(s)=\mathbb{E}[e^{s S_k}]$ is the MGF of service time $S_{k}$.
\end{theorem}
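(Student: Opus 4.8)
The plan is to compute the missing probability as $p_{\text{mis}} = \Pr\{N_k = 0\}$, the probability that no decision falls within the $k$-th inter-departure interval $Y_k$. Since the decision process is Poisson with rate $\nu$ (the M/G/1/M assumption), conditioned on the length of the inter-departure interval $Y_k = y$, the number of decisions in that interval is Poisson with parameter $\nu y$, so $\Pr\{N_k = 0 \mid Y_k = y\} = e^{-\nu y}$. Hence $p_{\text{mis}} = \mathbb{E}[e^{-\nu Y_k}] = G_{Y}(-\nu)$, the MGF of $Y_k$ evaluated at $-\nu$. (One should be slightly careful here: the decision process is Poisson globally, not restarted at each departure, but by the memorylessness of the exponential inter-decision times and the fact that the partition into inter-departure intervals is determined by the arrival/service processes, the waiting time to the next decision from $t'_{k-1}$ is again exponential$(\nu)$; this is the same reasoning already used implicitly in \cite{Dong-2019}.)

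First I would recall from the proof of \textit{Theorem}~\ref{th:MG1} (Appendix~\ref{prop:MG1}) the two-case description of $Y_k$: on the event $\{X_k > T_{k-1}\}$ we have $Y_k = X_k + S_k - T_{k-1}$, and on the event $\{X_k < T_{k-1}\}$ we have $Y_k = S_k$. So I would split
\begin{align}
\mathbb{E}[e^{-\nu Y_k}] = \mathbb{E}\!\left[e^{-\nu(X_k+S_k-T_{k-1})}\mathbf{1}_{\{X_k>T_{k-1}\}}\right] + \mathbb{E}\!\left[e^{-\nu S_k}\mathbf{1}_{\{X_k<T_{k-1}\}}\right].\nonumber
\end{align}
In the second term, $S_k$ is independent of the event $\{X_k < T_{k-1}\}$, so it factors as $G_{\text{S}}(-\nu)\Pr\{X_k < T_{k-1}\}$. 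For the first term I would condition on $T_{k-1}$ and use that $X_k$ is exponential$(\lambda)$ and independent of $T_{k-1}$ and of $S_k$: integrating $e^{-\nu x}\lambda e^{-\lambda x}$ over $x \in (T_{k-1},\infty)$ produces a factor $\frac{\lambda}{\lambda+\nu}e^{-(\lambda+\nu)T_{k-1}}$, and then the remaining $e^{\nu T_{k-1}}$ combines to give $\frac{\lambda}{\lambda+\nu}\mathbb{E}[e^{-\lambda T_{k-1}}] = \frac{\lambda}{\lambda+\nu}G_{\text{T}}(-\lambda)$, again times the independent factor $G_{\text{S}}(-\nu)$.

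At this point both pieces carry a common factor $G_{\text{S}}(-\nu)$, and what remains is to assemble $\frac{\lambda}{\lambda+\nu}G_{\text{T}}(-\lambda) + \Pr\{X_k < T_{k-1}\}$ and show it equals $\frac{\rho\nu+\lambda}{\lambda+\nu}$. Here I would reuse the identity from Appendix~\ref{prop:MG1} that $\Pr\{X_k < T_{k-1}\} = 1 - G_{\text{T}}(-\lambda)$ (equivalently $\Pr\{X_k > T_{k-1}\} = \mathbb{E}[e^{-\lambda T_{k-1}}]$, by conditioning on $T_{k-1}$), together with the Pollaczek–Khinchine-type formula $G_{\text{T}}(-\lambda) = \frac{(1-\rho)G_{\text{S}}(-\lambda)}{1 - \rho G_{\text{S}}(-\lambda)}$ already invoked for $\omega$ in \textit{Theorem}~\ref{th:MG1}. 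Substituting and simplifying the resulting rational expression in $G_{\text{S}}(-\lambda)$, $\rho$, $\lambda$, $\nu$ should collapse to $\frac{\rho\nu+\lambda}{\lambda+\nu}$, yielding $p_{\text{mis}} = G_{\text{S}}(-\nu)\frac{\rho\nu+\lambda}{\lambda+\nu}$.

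The main obstacle is the bookkeeping in the first term: one must handle the joint dependence correctly — $X_k$ and $S_k$ and $T_{k-1}$ are mutually independent, but $Y_k$ on that event mixes all three — and then verify that the two stationary identities ($\Pr\{X_k<T_{k-1}\}=1-G_{\text{T}}(-\lambda)$ and the P–K formula for $G_{\text{T}}(-\lambda)$) combine to kill the $G_{\text{S}}(-\lambda)$ dependence entirely, leaving an expression in $\rho$ alone for that bracket. That algebraic cancellation is the only real subtlety; everything else is a routine conditioning argument. Full details are deferred to the appendix.
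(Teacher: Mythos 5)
Your overall route is the same as the paper's own proof in Appendix~\ref{proof:pmis}: write $p_{\text{mis}}=\Pr\{N_k=0\}=\mathbb{E}[e^{-\nu Y_k}]$ (using memorylessness of the Poisson decision process), split on $\{X_k\le T_{k-1}\}$ versus $\{X_k>T_{k-1}\}$ via \eqref{eq:yk}, pull out the common factor $G_{\text{S}}(-\nu)$, and evaluate the exponential-tail integral. Your computation of the first term is correct and reproduces the paper's integral $\int_0^\infty f_{\text{T}}(t)\,dt\int_t^\infty e^{-\nu(x-t)}f_{\text{X}}(x)\,dx=\frac{\lambda}{\lambda+\nu}\,G_{\text{T}}(-\lambda)$, and your identity $\Pr\{X_k>T_{k-1}\}=\mathbb{E}[e^{-\lambda T_{k-1}}]=G_{\text{T}}(-\lambda)$ is also right.

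The step that would fail as written is the final substitution. The formula you quote, $G_{\text{T}}(-\lambda)=\frac{(1-\rho)G_{\text{S}}(-\lambda)}{1-\rho G_{\text{S}}(-\lambda)}$, is not the paper's MGF evaluated at $s=-\lambda$ and is false in general: it agrees with the truth only for exponential service (for deterministic service with $\rho=1/2$ it gives about $0.435$ instead of $1/2$). With that expression, the bracket $\frac{\lambda}{\lambda+\nu}G_{\text{T}}(-\lambda)+\Pr\{X_k<T_{k-1}\}$ retains a genuine $G_{\text{S}}(-\lambda)$ dependence and does not collapse to $\frac{\rho\nu+\lambda}{\lambda+\nu}$, so the cancellation you defer to the appendix would not go through. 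The fix is immediate and is exactly what the paper relies on: evaluating $G_{\text{T}}(s)=\frac{-s(1-\rho)G_{\text{S}}(s)}{-s-\lambda+\lambda G_{\text{S}}(s)}$ at $s=-\lambda$ gives $G_{\text{T}}(-\lambda)=\frac{\lambda(1-\rho)G_{\text{S}}(-\lambda)}{\lambda G_{\text{S}}(-\lambda)}=1-\rho$, equivalently $\Pr\{X_k\le T_{k-1}\}=\rho$ as already established in \eqref{eq:tkxk}; then the bracket is $\rho+\frac{\lambda(1-\rho)}{\lambda+\nu}=\frac{\rho\nu+\lambda}{\lambda+\nu}$, no $G_{\text{S}}(-\lambda)$ ever appears, and the claimed result follows.
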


\begin{proof}
See  Appendix~\ref{proof:pmis}.
\end{proof}

Since the freshness and the utilization of the received updates are two of the most important concerns of IoT systems, the results on the average AuD (cf. Theorem \ref{th:MG1}, Corollary \ref{cor:MG1M}, and Theorem \ref{MGj}) and the missing probability (Theorem \ref{pmis}) provide effective evaluations for IoT based update and decide systems.

\subsection{Simulation Results}\label{sec:simulation}
In this subsection, we analyze how the arrival rate $\lambda$ and the service rate $\mu$ affect the average AuD of an M/G/1/M update-and-decide system through numerical results and Monte Carlo simulations, in which, the uniform service process, the exponential service process, and the deterministic service process are considered.

In Fig.    \ref{fig:lambath}, we present how the average AuDs of the three systems change with the common arrival rate $\lambda$, where the common service is set to $\mu=1.5$.
    First, we observe that the average AuD is large when $\lambda$ is either very small or very large.
This is because the waiting times for new updates are large when $\lambda$ is small, while the waiting times for being served are large when $\lambda$ is large.
        Second, the average AuD of the update-and-decide system is minimized when the arrival rate $\lambda$ is close to $\mu/2$.
    Third, we see that the system with a deterministic service process performs the best and the exponential process performs the worst when $\lambda$ is large.
Fourth, when $\lambda$ is small, all the three service process under test perform almost the same.
    This is because when $\lambda$ is small, the updates do not need to wait and can be delivered immediately at their arrivals, regardless of the type of the service process.

Fig.    \ref{fig:muth} presents how the average AuD changes with service rate $\mu$ when arrival rate is set to $\lambda=0.5$.
As is shown, the average AuD decreases when the service rate $\mu$ is increased.
    In particular, the average AuDs will converge to $1/\lambda$  as $\mu$ goes to infinity, i.e., is determined solely by the arrival process.
We also observe in Figs. \ref{fig:lambath} and  \ref{fig:muth} that the Monte Carlo results and the obtained theoretical results are well matched.

Fig.    \ref{fig:pmis} presents the missing probability of the system.
    We see that $p_{\text{mis}}$ would be decreased when $\nu$ is increased.
We also see that the deterministic service process performs the best and the exponential process performs the worst, especially when $\nu$ is large.

\section{Average AuD With Deterministic Decisions}\label{sec:MG1D}
In this section, we consider the performance of M/G/1/D update-and-decide systems in which the inter-decision time is deterministic.
    Unlike M/G/1/M update-and-decide systems, the decision epochs are no longer uniformly distributed within each inter-departure time $Y_{k}$.
To explicitly characterize the performance of M/G/1/D update-and-decide systems, however, we shall assume that the decision epochs are approximately uniformly distributed among inter-departure times.
    As is shown by the obtained results, this approximation only leads to negligible error.

For the given service rate $\mu$ and decision rate $\nu$, the average service time and the average inter-decision time would be $1/\mu$ and $1/\nu$, respectively.
    We assume that the decision rate $\nu$ is an integer multiple of the service rate $\mu$, i.e., $\nu=m_{0}\mu$.
To keep the missing probability $p_{\text{mis}}$ low, $m_{0}$ is set to be relatively large, i.e., $m_{0}\geq 1$.

\subsection{Average AuD of M/G/1/D Update-and-Decide Systems}\label{sec:mg1d}
We denote the {pdf} of the inter-arrival time as $f_\text{X}(x)$, the {pdf} of the service time as $f_\text{S}(x)$, and the {pdf} of the service time as $f_\text{Z}(x)$.
    Thus, we have $f_\text{X}(x)=\lambda e^{-\lambda x}$, and $f_{\text{Z}}(x)=\delta\left(x-\frac{1}{\nu}\right)$.

\begin{theorem} \label{th:MG1D}
In an FCFS M/G/1/D update-and-decide system, the average AuD is given by
\begin{align}\nonumber
&\widebar{\Delta}_\text{D}^\text{M/G/1/D}\\
&~~~=\frac{\lambda\rho}{\nu}\bigg(\mathbb{E}[T_{k-1}|X_{k}\leq T_{k-1}]\mathbb{E}[N_{k}^{1}]+\frac{\mathbb{E}[(N_{k}^{1})^{2}]}{2\nu}\bigg)\nonumber \\
&~~~~~+\frac{\lambda(1-\rho)}{\nu}\bigg(\mathbb{E}[T_{k-1}|X_{k}>T_{k-1}](\mathbb{E}[N_{k}^{2}]+\mathbb{E}[N_{k}^{3}])\nonumber \\
\label{eq:deltaMG1D}
&~~~~~+\frac{\mathbb{E}[(N_{k}^{2})^2]+\mathbb{E}[(N_{k}^{3})^2]+2(\mathbb{E}[N_{k}^{2}]+\mathbb{E}[N_{k}^{3}])}{2\nu}\bigg),
\end{align}
\end{theorem}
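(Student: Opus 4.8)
The plan is to evaluate $\widebar{\Delta}_\text{D}$ directly from the defining limit \eqref{eq:deltaD} by a renewal--reward argument over inter-departure intervals: the tidy G/G/1/M expression \eqref{eq:deltaAUD} was obtained using the memorylessness of a Poisson decision process and is no longer applicable once the inter-decision time is deterministic. First I would partition the decision epochs $\tau_j$ according to which inter-departure interval $[t_{k-1}',t_k')$ contains them. Every decision lies in exactly one such interval, departures occur at rate $\lambda$, and decisions at rate $\nu$, so $\mathbb{E}[N_k]=\nu/\lambda$ and \eqref{eq:deltaD} becomes
\begin{align}\nonumber
\widebar{\Delta}_\text{D}=\frac{1}{\mathbb{E}[N_k]}\,\mathbb{E}\!\left[\sum_{j=1}^{N_k}\Delta_\text{D}(\tau_{k_j})\right]=\frac{\lambda}{\nu}\,\mathbb{E}\!\left[\sum_{j=1}^{N_k}\Delta_\text{D}(\tau_{k_j})\right],
\end{align}
which already yields the $\lambda/\nu$ prefactor. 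Throughout $[t_{k-1}',t_k')$ the freshest delivered update is update $k-1$, so writing a decision as $\tau_{k_j}=t_{k-1}'+z_{k_j}$ with offset $z_{k_j}\in[0,Y_k)$ gives $\Delta_\text{D}(\tau_{k_j})=\tau_{k_j}-t_{k-1}=T_{k-1}+z_{k_j}$, and therefore $\sum_{j}\Delta_\text{D}(\tau_{k_j})=N_kT_{k-1}+\sum_{j}z_{k_j}$.

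Next I would condition on whether update $k$ waits. Since arrivals are Poisson, $X_k$ is exponential and independent of $T_{k-1}$, and by PASTA an arrival finds the server busy (equivalently, update $k-1$ still in the system) with probability $\rho$; hence $\Pr\{X_k\le T_{k-1}\}=\rho$ and $\Pr\{X_k>T_{k-1}\}=1-\rho$, which is the origin of the $\rho$ and $1-\rho$ weights. In the waiting case we have $Y_k=S_k$ and a single stretch carrying $N_k^{1}$ decisions; in the non-waiting case we have $Y_k=(X_k-T_{k-1})+S_k$ and I would split the interval into the idle stretch $[t_{k-1}',t_k)$ of length $X_k-T_{k-1}$, carrying $N_k^{2}$ decisions, and the in-service stretch $[t_k,t_k')$ of length $S_k$, carrying $N_k^{3}$ decisions, so $N_k=N_k^{2}+N_k^{3}$.

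The core step is to evaluate $\mathbb{E}[N_kT_{k-1}]$ and $\mathbb{E}[\sum_j z_{k_j}]$ conditionally on each case. For the first, $T_{k-1}$ is independent of $S_k$ and of the decision phase, so conditionally on a case it factorizes, giving the terms $\mathbb{E}[T_{k-1}\mid X_k\le T_{k-1}]\,\mathbb{E}[N_k^{1}]$ and $\mathbb{E}[T_{k-1}\mid X_k>T_{k-1}]\,(\mathbb{E}[N_k^{2}]+\mathbb{E}[N_k^{3}])$. For the offset sums I would invoke the stated approximation that the deterministic decision epochs are uniformly spread inside each (sub-)interval: a stretch carrying $N$ decisions then has offsets, measured from its own left endpoint, approximately $\tfrac{1}{2\nu},\tfrac{3}{2\nu},\dots,\tfrac{2N-1}{2\nu}$, whose sum has mean $\tfrac{\mathbb{E}[N^{2}]}{2\nu}$, echoing the continuum identity $\nu\int_0^{N/\nu}\!z\,\mathrm{d}z$. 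Assembling the single stretch of the waiting case and the two stretches of the non-waiting case, together with the offset $X_k-T_{k-1}$ of the in-service stretch relative to $t_{k-1}'$ and the boundary residuals that occur because a stretch length is generally not a multiple of $1/\nu$ (these residuals produce the $\tfrac{2(\mathbb{E}[N_k^{2}]+\mathbb{E}[N_k^{3}])}{2\nu}$ correction), and then weighting the two cases by $\rho$ and $1-\rho$, gives \eqref{eq:deltaMG1D}.

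The main obstacle is precisely these offset sums. Because the decision process is deterministic, the counts $N_k^{1},N_k^{2},N_k^{3}$ and the positions of the decisions are deterministic functions of the stretch lengths and of a single global phase, so they are correlated with $T_{k-1},X_k,S_k$ and with one another, and the clean identity ``sum of offsets $=N^{2}/(2\nu)$ in mean'' holds only under the uniform-placement approximation. Making this precise---specifying $N_k^{1},N_k^{2},N_k^{3}$ exactly, tracking the boundary residuals that generate the linear-in-$N$ terms, and arguing (as the paper asserts and its simulations confirm) that the induced error is negligible---is the delicate part; granting that, the remaining manipulations are routine bookkeeping with first and second moments of the $N_k^{i}$.
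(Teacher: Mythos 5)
Your proposal mirrors the paper's own proof: the same renewal--reward normalization giving the $\lambda/\nu$ prefactor via $\lim N_{\text{T}}/K=\nu/\lambda$, the same conditioning on $\Pr\{X_k\le T_{k-1}\}=\rho$ versus $1-\rho$, the same split of the non-waiting inter-departure time into an idle stretch and an in-service stretch carrying $N_k^2$ and $N_k^3$ decisions, and the same uniform-phase approximation that turns the offset sums into the $\mathbb{E}[N^2]/(2\nu)$ terms with the conditional factorization $\mathbb{E}[T_{k-1}\mid\cdot]\,\mathbb{E}[N]$. The one step you leave heuristic---the linear correction $2(\mathbb{E}[N_k^2]+\mathbb{E}[N_k^3])/(2\nu)$---is likewise asserted rather than carefully derived in the paper's Appendix~\ref{proof:MG1D}, so your argument matches the paper's both in route and in level of rigor.
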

in which $N_{k}^{1}$ is the number of decisions made during $Y_{k}$ conditioned on $X_{k}<T_{k-1}$, $N_{k}^{2}$ and $N_{k}^{3}$ are the numbers of decisions made during the parts of $Y_{k}$ before and after the arrival of update $k$, in the case of $X_{k}>T_{k-1}$.

\begin{proof}
See  Appendix~\ref{proof:MG1D}.
\end{proof}

In particular, the terms $\mathbb{E}[T_{k-1}|X_{k}\leq T_{k-1}]$ and $\mathbb{E}[T_{k-1}|X_{k}>T_{k-1}]$ can be given by the following proposition.

\begin{proposition}\label{pro:tk}
The expectations of $T_{k}$ conditioned on $X_{k}<T_{k-1}$ and $X_{k}>T_{k-1}$ are respectively given by
\begin{align}\label{tk1}
\mathbb{E}[T_{k-1}|X_{k}\leq T_{k-1}]=&\frac{\mathbb{E}[T_{k}]+w}{\rho}, \nonumber \\
\mathbb{E}[T_{k-1}|X_{k}>T_{k-1}]=&-\frac{\omega}{1-\rho},
\end{align}
where $\omega=\frac{\text{d}G_\text{T}(s)}{\text{d}s}|_{s=-\lambda}=\frac{(1-\rho)(G_\text{S}(-\lambda)-1)}{\lambda G_\text{S}(-\lambda)}$, $G_\text{T}(s)$ and $G_\text{S}(s)$ are the MGFs of the system time $T_{k}$ and the service time $S_k$.
\end{proposition}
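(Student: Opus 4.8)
The plan is to obtain both conditional expectations from a single conditioning step. Because arrivals form a Poisson process, the gap $X_{k}$ to the next arrival is exponential with rate $\lambda$ and is independent of $T_{k-1}=W_{k-1}+S_{k-1}$: the latter is a deterministic function of the arrivals up to $t_{k-1}$ and of the service requirements $S_{1},\dots,S_{k-1}$, none of which involve the inter-arrival interval following $t_{k-1}$. Conditioning on $T_{k-1}$ gives $\Pr\{X_{k}>T_{k-1}\mid T_{k-1}\}=e^{-\lambda T_{k-1}}$, so that $\mathbb{E}[T_{k-1}\mid X_{k}>T_{k-1}]=\mathbb{E}[T_{k-1}e^{-\lambda T_{k-1}}]/\Pr\{X_{k}>T_{k-1}\}$ and $\mathbb{E}[T_{k-1}\mid X_{k}\le T_{k-1}]=\mathbb{E}[T_{k-1}(1-e^{-\lambda T_{k-1}})]/\Pr\{X_{k}\le T_{k-1}\}$; the task reduces to computing the probability $\Pr\{X_{k}>T_{k-1}\}$ and the $e^{-\lambda T_{k-1}}$-weighted first moment of $T_{k-1}$.

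First I would fix the two probabilities. The event $\{X_{k}>T_{k-1}\}$ is exactly the event that update $k$ arrives to an empty system (in an FCFS single-server queue, update $k-1$ is the last of $1,\dots,k-1$ to depart), so by PASTA $\Pr\{X_{k}>T_{k-1}\}=1-\rho$ and $\Pr\{X_{k}\le T_{k-1}\}=\rho$; equivalently $\mathbb{E}[e^{-\lambda T_{k-1}}]=G_{\text{T}}(-\lambda)=1-\rho$, the Pollaczek--Khinchine identity for the M/G/1 sojourn time that already underlies Theorem~\ref{th:MG1}. Next, differentiating under the expectation, $\mathbb{E}[T_{k-1}e^{sT_{k-1}}]=\frac{\mathrm{d}}{\mathrm{d}s}G_{\text{T}}(s)$, so $\mathbb{E}[T_{k-1}e^{-\lambda T_{k-1}}]=\frac{\mathrm{d}G_{\text{T}}(s)}{\mathrm{d}s}\big|_{s=-\lambda}$ which, in the stationary regime where $T_{k-1}$ and $T_{k}$ share the law of $T$, is --- up to the sign in which it is written --- the quantity $\omega$ of Theorem~\ref{th:MG1}. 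Dividing this weighted moment by $1-\rho$ yields the second formula; for the first, I would use the law of total expectation $\mathbb{E}[T_{k-1}]=\mathbb{E}[T_{k-1}\mathbf{1}\{X_{k}\le T_{k-1}\}]+\mathbb{E}[T_{k-1}\mathbf{1}\{X_{k}>T_{k-1}\}]$, replace $\mathbb{E}[T_{k-1}]$ by $\mathbb{E}[T_{k}]$ (stationarity), solve for $\mathbb{E}[T_{k-1}\mathbf{1}\{X_{k}\le T_{k-1}\}]$, and divide by $\rho$.

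If one re-derives the closed form $\omega=\frac{(1-\rho)(G_{\text{S}}(-\lambda)-1)}{\lambda G_{\text{S}}(-\lambda)}$ here instead of quoting Theorem~\ref{th:MG1}, it follows from $G_{\text{T}}(s)=G_{\text{W}}(s)G_{\text{S}}(s)$ with the Pollaczek--Khinchine waiting-time MGF $G_{\text{W}}(s)=\frac{s(1-\rho)}{s+\lambda-\lambda G_{\text{S}}(s)}$, hence $G_{\text{T}}(s)=\frac{s(1-\rho)G_{\text{S}}(s)}{s+\lambda-\lambda G_{\text{S}}(s)}$, by a quotient-rule differentiation evaluated at $s=-\lambda$, where the denominator reduces to $-\lambda G_{\text{S}}(-\lambda)$ and the numerator collapses after cancellation. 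The step I expect to require the most care --- more bookkeeping than mathematics --- is carrying the sign with which $\omega$ enters each formula consistently with the convention $G(s)=\mathbb{E}[e^{sX}]$, and stating cleanly the independence of $X_{k}$ and $T_{k-1}$ together with the stationarity identity $\mathbb{E}[T_{k-1}]=\mathbb{E}[T_{k}]$; beyond that, the argument is the one-line conditioning identity above.
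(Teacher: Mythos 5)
Your proof is correct and follows essentially the same route as the paper's: both rest on the independence of $X_{k}$ and $T_{k-1}$ with $\Pr\{X_{k}>T_{k-1}\mid T_{k-1}\}=e^{-\lambda T_{k-1}}$ (the paper writes this as a Fubini-swapped double integral of $f_{\text{X}}(x)f_{\text{T}}(t)$), the fact $\Pr\{X_{k}\le T_{k-1}\}=\rho$, and the identification of $\mathbb{E}\bigl[T_{k-1}e^{-\lambda T_{k-1}}\bigr]$ with the MGF derivative at $s=-\lambda$; your derivation of the first formula from the second via total expectation is the same computation as the paper's $\mathbb{E}[T(1-e^{-\lambda T})]=\mathbb{E}[T]-\mathbb{E}[Te^{-\lambda T}]$. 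Your caution about signs is warranted and correctly resolved: since $G_{\text{S}}(-\lambda)<1$, the paper's closed form for $\omega$ is negative, so $\omega$ effectively stands for $-\mathbb{E}\bigl[T_{k-1}e^{-\lambda T_{k-1}}\bigr]$, which is precisely the convention under which both displayed formulas hold.
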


\begin{proof}
See  Appendix~\ref{proof:tk}.
\end{proof}

Similar to the analysis for the M/G/1/M update-and-decide systems, we shall investigate the M/G/1/D update-and-decide systems under three typical distributions of service time $S_{k}$ as shown in the following corollary.


\begin{corollary}\label{cor:MG1D}
In M/U/1/D, M/M/1/D, M/D/1/D update-and-decide systems with arrival rate $\lambda$, service rate $\mu$, and periodic decisions of rate $\nu$, the average AuDs are given by
\begin{align}
\widebar{\Delta}_\text{D}^\text{M/U/1/D}=&\frac{2e^{2\rho}(1-\rho)}{\mu(e^{2\rho}-1)}+\frac{(1-\rho)(1+u_{0})}{2\mu m_{0}(1-u_{0})}-\frac{1}{\mu\rho(1-\rho)}\nonumber \\
&-\frac{(2m_{0}^{2}+1)\rho^{2}+(16m_{0}^2-1)\rho-36m_{0}^{2}-6m_{0}}{12m_{0}^{2}\mu(1-\rho)},\nonumber \\
\widebar{\Delta}_\text{D}^\text{M/M/1/D}=&\frac{2\rho^2-3\rho+2}{\mu(1-\rho)}+\frac{\rho(1+\omega_{0})}{2\mu m_{0}(1-\omega_{0})}+\frac{(1-\rho)(1+u_{0})}{2\mu m_{0}(1-u_{0})},\nonumber \\
\widebar{\Delta}_\text{D}^\text{M/D/1/D}=&\frac{e^{\rho}(1-\rho)}{\mu\rho}+\frac{(1-\rho)(1+u_{0})}{2\mu m_{0}(1-u_{0})}+\frac{-3\rho^2+6\rho-2}{2\mu\rho(1-\rho)},\nonumber
\end{align}
\end{corollary}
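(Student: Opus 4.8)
The plan is to specialize the general formula for $\widebar{\Delta}_\text{D}^\text{M/G/1/D}$ in Theorem~\ref{th:MG1D} to each of the three service-time distributions $f_{\text{SU}}$, $f_{\text{SE}}$, $f_{\text{SD}}$, exactly as Corollary~\ref{cor:MG1M} was obtained from Theorem~\ref{th:MG1}. The formula in \eqref{eq:deltaMG1D} has three kinds of ingredients: (i) the conditional expectations $\mathbb{E}[T_{k-1}\mid X_k\le T_{k-1}]$ and $\mathbb{E}[T_{k-1}\mid X_k>T_{k-1}]$, which Proposition~\ref{pro:tk} already expresses through $\mathbb{E}[T_k]$, $\omega$, and $\rho$; (ii) the first two moments of the decision counts $N_k^1,N_k^2,N_k^3$; and (iii) the MGF $G_\text{S}(s)$, evaluated where needed. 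So the first step is to assemble, for a general service law, closed forms for $\mathbb{E}[N_k^i]$ and $\mathbb{E}[(N_k^i)^2]$, $i=1,2,3$, in terms of $\lambda,\mu,\nu$ and the relevant transforms of $S_k$ (and of $T_k$). Since the decision process is periodic with spacing $1/\nu$ and $\nu=m_0\mu$, each $N_k^i$ is (approximately, under the stated uniform-phase assumption) a number of grid points falling in an interval whose length is a known random variable — $S_k$ for $N_k^1$, and the two pieces of $Y_k=X_k+S_k-T_{k-1}$ before/after the arrival of update $k$ for $N_k^2,N_k^3$ — so its conditional moments reduce to moments of $\nu\cdot(\text{interval length})$, which are in turn first and second moments of $S_k$, $X_k$, $T_{k-1}$ and their products under the appropriate conditioning. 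These are all computable from $G_\text{S}$, $G_\text{T}$, and $f_\text{X}(x)=\lambda e^{-\lambda x}$.

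The second step is the three substitutions. For the exponential case $G_\text{SE}(s)=\mu/(\mu-s)$, the memorylessness of $S_k$ makes $T_k$ itself exponential (M/M/1), so $\mathbb{E}[T_k]$, $\mathbb{E}[T_k^2]$, $\omega$, and the conditional $T_{k-1}$ moments all collapse to elementary rational functions of $\rho$; the geometric-type sums defining $\mathbb{E}[N_k^i],\mathbb{E}[(N_k^i)^2]$ produce the factors $1/(1-\omega_0)$ and $1/(1-u_0)$ appearing in $\widebar{\Delta}_\text{D}^\text{M/M/1/D}$, where $\omega_0$ and $u_0$ are the obvious per-interval decay ratios (presumably $\omega_0=e^{-\rho/m_0}$-type and $u_0$ a truncation ratio from the uniform pieces; their precise definitions should be stated). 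For the uniform case $f_{\text{SU}}(x)=\mu/2$ on $(0,2/\mu)$ — note the paper's density normalization must be fixed so that $\mathbb{E}[S_k]=1/\mu$, i.e. support $(0,2/\mu)$, not $(0,\mu)$ — one uses $G_\text{SU}(-\lambda)=\frac{1-e^{-2\rho}}{2\rho}$, giving the $e^{2\rho}/(e^{2\rho}-1)$ prefactors, and the polynomial moments $\mathbb{E}[S_k^2]=\tfrac{4}{3\mu^2}$ etc., producing the degree-two-in-$\rho$ polynomial in the last bracket of $\widebar{\Delta}_\text{D}^\text{M/U/1/D}$. For the deterministic case $S_k\equiv 1/\mu$, $G_\text{SD}(-\lambda)=e^{-\rho}$ and all $S_k$-moments are exact powers of $1/\mu$, and $N_k^1=m_0$ is constant, which is why the $1/(1-\omega_0)$ factor is absent from $\widebar{\Delta}_\text{D}^\text{M/D/1/D}$; the $e^{\rho}/\rho$ term comes straight from $G_\text{SD}(-\lambda)^{-1}=e^{\rho}$ via Proposition~\ref{pro:tk}. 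After substitution, each result is obtained by routine algebraic simplification (common denominator $\mu(1-\rho)$, collecting like powers of $\rho$ and $m_0$), which I would relegate to the appendix rather than reproduce here.

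The main obstacle is the bookkeeping for $N_k^2$ and $N_k^3$ in the $X_k>T_{k-1}$ branch: unlike $N_k^1$, whose interval ($Y_k=S_k$) does not involve the arrival time, the split of the periodic grid at the arrival epoch $t_k$ introduces a fractional-part coupling between the two counts, and getting $\mathbb{E}[(N_k^2)^2]+\mathbb{E}[(N_k^3)^2]$ (as opposed to $\mathbb{E}[(N_k^2+N_k^3)^2]$) right is where the uniform-phase approximation is actually used and where sign/off-by-one errors creep in. I expect this is exactly the step that forces the "approximately uniformly distributed decision epochs" hypothesis and the "negligible error" caveat, and it is the part whose details I would work out carefully in Appendix~\ref{proof:MG1D}/the corollary's appendix, checking the final expressions against the Monte-Carlo curves in the simulation section and against the $m_0\to\infty$ limit, which should recover the corresponding M/G/1/M values from Corollary~\ref{cor:MG1M}.
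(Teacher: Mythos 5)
Your proposal follows essentially the same route as the paper's Appendix~\ref{proof:Average}: specialize Theorem~\ref{th:MG1D} using the Pollaczek--Khinchine expression for $\mathbb{E}[T_k]$, Proposition~\ref{pro:tk} for the conditional expectations, and the uniform-phase approximation (plus memorylessness of $X_k-T_{k-1}$, giving the geometric counts with ratios $u_0=e^{-\lambda/\nu}$ and $\omega_0=e^{-\mu/\nu}$) to obtain the first two moments of $N_k^1,N_k^2,N_k^3$ for each service law, followed by substitution and simplification. Your reading of where the approximation enters, the constancy of $N_k^1=m_0$ in the deterministic case, and the correction of the uniform density's support to $(0,2/\mu)$ are all consistent with the paper's derivation, so this is the same proof in outline.
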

where, $\omega_{0}=e^{-\frac{\mu}{\nu}}$ and $u_{0}=e^{-\frac{\lambda}{\nu}}$.
\begin{proof}
See  Appendix~\ref{proof:Average}.
\end{proof}

Based on \textit{Corollary} \ref{cor:MG1D}, we  have the following proposition.
\begin{proposition} \label{MG1Dj}
With the same arrival rate $\lambda$, service rate $\mu$, and decision rate $\nu=m_{0}\mu$, we have
\begin{align}
&\widebar{\Delta}_{\text{D}}^{\text{M/U/1/D}}\geq \widebar{\Delta}_{\text{D}}^{\text{M/U/1/M}}, \nonumber \\
&\widebar{\Delta}_{\text{D}}^{\text{M/M/1/D}}\geq\widebar{\Delta}_{\text{D}}^{\text{M/M/1/M}}, \nonumber \\
&\widebar{\Delta}_{\text{D}}^{\text{M/D/1/D}}\geq\widebar{\Delta}_{\text{D}}^{\text{M/D/1/M}},
\end{align}
with equalities hold as $m_{0}$ goes to infinity.
\end{proposition}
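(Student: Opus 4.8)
The proof is a direct calculation: all six quantities in \textit{Proposition}~\ref{MG1Dj} are available in closed form — the three M/X/1/M expressions from \textit{Corollary}~\ref{cor:MG1M} and the three M/X/1/D expressions from \textit{Corollary}~\ref{cor:MG1D} — so the plan is to form the three pairwise differences $\widebar{\Delta}_{\text{D}}^{\text{M/X/1/D}}-\widebar{\Delta}_{\text{D}}^{\text{M/X/1/M}}$ for $\text{X}\in\{\text{U},\text{M},\text{D}\}$, simplify each, and show it is nonnegative and tends to $0$ as $m_{0}\to\infty$. Throughout I would use $\lambda=\mu\rho$, $\nu=m_{0}\mu$, $u_{0}=e^{-\rho/m_{0}}$, and $\omega_{0}=e^{-1/m_{0}}$.

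The key structural observation is that, after substituting $\lambda=\mu\rho$ and $\nu=m_{0}\mu$, the \emph{only} $m_{0}$-dependence in each M/X/1/D formula sits in the fractions $\frac{1+u_{0}}{2m_{0}(1-u_{0})}$ and (for M/M/1/D) $\frac{1+\omega_{0}}{2m_{0}(1-\omega_{0})}$; every other term is rational in $\rho$. For M/D/1/D, for instance, a short manipulation gives
\[
\widebar{\Delta}_{\text{D}}^{\text{M/D/1/D}}-\widebar{\Delta}_{\text{D}}^{\text{M/D/1/M}}
=\frac{1-\rho}{\mu}\Bigl(\frac{1+u_{0}}{2m_{0}(1-u_{0})}-\frac{1}{\rho}\Bigr),
\]
and I expect the M/U/1/D and M/M/1/D differences to reduce in the same way to a positive multiple of $\bigl(\frac{1+u_{0}}{2m_{0}(1-u_{0})}-\frac{1}{\rho}\bigr)$ plus, for M/M/1/D, a positive multiple of $\bigl(\frac{1+\omega_{0}}{2m_{0}(1-\omega_{0})}-1\bigr)$. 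Equivalently, the $m_{0}$-free parts of each M/X/1/D expression, together with a constant, reproduce exactly $\widebar{\Delta}_{\text{D}}^{\text{M/X/1/M}}$.

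Nonnegativity then follows from one elementary lemma: $\coth y>1/y$ for all $y>0$, i.e. $\frac{1+e^{-x}}{1-e^{-x}}>\frac{2}{x}$ for $x>0$, which is immediate from $x\cosh x-\sinh x=\sum_{n\ge1}\frac{2n}{(2n+1)!}\,x^{2n+1}>0$. Applying it with $x=\rho/m_{0}$ gives $\frac{1+u_{0}}{2m_{0}(1-u_{0})}=\frac{1}{2m_{0}}\coth\!\bigl(\frac{\rho}{2m_{0}}\bigr)>\frac{1}{\rho}$, and with $x=1/m_{0}$ gives $\frac{1+\omega_{0}}{2m_{0}(1-\omega_{0})}>1$; since $\rho<1$ and $m_{0}\ge1$ the prefactors are positive, so each difference is $\ge0$. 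For the equality-in-the-limit claim I would use the expansion $\coth y=\frac{1}{y}+\frac{y}{3}+O(y^{3})$ as $y\to0$, which yields $\frac{1+u_{0}}{2m_{0}(1-u_{0})}-\frac{1}{\rho}=O(1/m_{0}^{2})\to0$ and similarly for the $\omega_{0}$-term, so all three differences vanish as $m_{0}\to\infty$.

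The main obstacle is purely the algebraic bookkeeping in the M/U/1/D case: its closed form in \textit{Corollary}~\ref{cor:MG1D} carries the extra block $\frac{(2m_{0}^{2}+1)\rho^{2}+(16m_{0}^{2}-1)\rho-36m_{0}^{2}-6m_{0}}{12m_{0}^{2}\mu(1-\rho)}$, whose $m_{0}$-dependence must be separated (e.g. $\frac{2m_{0}^{2}+1}{12m_{0}^{2}}\to\frac{1}{6}$) and whose $m_{0}$-free remainder must be checked to combine with the $e^{2\rho}$-terms to give precisely $\widebar{\Delta}_{\text{D}}^{\text{M/U/1/M}}$; this is the step most prone to sign and factoring errors. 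The M/M/1/D and M/D/1/D cases and the limiting statement are routine once the $\coth$ lemma is established.
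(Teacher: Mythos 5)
Your route is related to but sharper than the paper's: the paper's proof only argues that the expressions of \textit{Corollary}~\ref{cor:MG1D} decrease in $m_{0}$ (treating just the $u_{0}$- and $\omega_{0}$-fractions) and then computes $m_{0}(1-u_{0})\to\rho$, $m_{0}(1-\omega_{0})\to 1$, asserting that the conclusion follows ``readily'' from the limits, without ever exhibiting the differences or checking that the limits coincide with the \textit{Corollary}~\ref{cor:MG1M} values. You instead write the three differences in closed form and bound them by a $\coth$ inequality. For the M/D and M/M cases this works exactly as you predict: one can verify
\begin{align}
\widebar{\Delta}_{\text{D}}^{\text{M/M/1/D}}-\widebar{\Delta}_{\text{D}}^{\text{M/M/1/M}}
=\frac{\rho}{\mu}\Bigl(\tfrac{1+\omega_{0}}{2m_{0}(1-\omega_{0})}-1\Bigr)
+\frac{1-\rho}{\mu}\Bigl(\tfrac{1+u_{0}}{2m_{0}(1-u_{0})}-\tfrac{1}{\rho}\Bigr),\nonumber
\end{align}
and your M/D/1/D identity is also correct, so with $\coth y>1/y$ and $\coth y=1/y+y/3+O(y^{3})$ those two inequalities, and the vanishing of the gaps as $m_{0}\to\infty$, are established more rigorously than in the paper (you need no monotonicity claim at all).

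The genuine gap is the M/U case, which you defer as ``algebraic bookkeeping.'' With the M/U/1/D expression exactly as printed in \textit{Corollary}~\ref{cor:MG1D}, your planned verification fails: separating the last block (the extra $m_{0}$-dependent pieces $\frac{\rho}{12m_{0}^{2}\mu}+\frac{1}{2m_{0}\mu(1-\rho)}$ are harmless, positive and vanishing) leaves an $m_{0}$-free part whose value is $\widebar{\Delta}_{\text{D}}^{\text{M/U/1/M}}-\frac{\rho^{2}}{6\mu(1-\rho)}$, i.e.\ strictly \emph{below} the M/U/1/M value. Numerically, at $\rho=0.5$, $\mu=1.5$ the printed formula tends to about $2.110$ while $\widebar{\Delta}_{\text{D}}^{\text{M/U/1/M}}\approx 2.166$, so the first inequality of \textit{Proposition}~\ref{MG1Dj} is actually reversed for large $m_{0}$ if one takes the printed closed form literally; moreover, at $m_{0}=10$ the printed formula gives about $2.177$ rather than the $2.2640$ reported in Table~\ref{tb:aud_comprison}, whereas the M/M/1/D and M/D/1/D formulas do reproduce the table. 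In other words, the obstacle is not bookkeeping but an inconsistency (apparently a typo or approximation slip) in the corollary's M/U/1/D expression, which must be corrected (or the derivation in Appendix~\ref{proof:Average} redone) before your reduction applies; the paper's own proof has the same hole, since its monotonicity argument ignores the additional $m_{0}$-dependent terms of that expression and never confirms the limit matches $\widebar{\Delta}_{\text{D}}^{\text{M/U/1/M}}$.
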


\begin{proof}
It can be readily verified that $(1+\omega_{0})/(m_{0}(1-\omega_{0}))$ and $(1+u_{0})/(m_{0}(1-u_{0}))$ are equal to $(1+2/(1/\omega_{0}-1))/m_{0}$ and $(1+2/(1/u_{0}-1))/m_{0}$, respectively.
    Since the items $2/(1/\omega_{0}-1)$ and $2/(1/u_{0}-1)$ are increasing with $m_{0}$ more slowly than linear, the average AuDs shown in \textit{Corollary} \ref{cor:MG1D} would be decreasing with $m_{0}$.

Also note that the denominators $m_{0}(1-\omega_{0})$ and $m_{0}(1-u_{0})$ can be rewritten as $(1-\text{exp}(-1/m_{0}))/(1/m_{0})$ and $(1-\text{exp}(-\rho/m_{0}))/(1/m_{0})$.
As $m_{0}$ goes to infinity, therefore, we have $m_{0}(1-\omega_{0})=1$ and $m_{0}(1-u_{0})=\rho$.
    By combing these limits and the results in \textit{Corollary} \ref{cor:MG1D}, the proposition can be proved readily.
\end{proof}

Therefore, it is concluded that for the same service process, the M/G/1/M update-and-decide system perform better than the corresponding M/G/1/D update-and-decide system.

\begin{proposition} \label{prop:MG1Dbijiao}
With the same arrival rate $\lambda$, service rate $\mu$, and decision rate $\nu=m_{0}\mu$, as, we have
\begin{align}
\left\{
        \begin{aligned}
            &\widebar{\Delta}_{\text{D}}^{\text{M/D/1/D}}
                    <\widebar{\Delta}_{\text{D}}^{\text{M/U/1/D}}
                    <\widebar{\Delta}_{\text{D}}^{\text{M/M/1/D}}  &&\text{if}~ m_{0}> m_{0}^{*},\\
            &\widebar{\Delta}_{\text{D}}^{\text{M/D/1/D}}
                    <\widebar{\Delta}_{\text{D}}^{\text{M/M/1/D}}
                    <\widebar{\Delta}_{\text{D}}^{\text{M/U/1/D}}  &&\text{if}~ m_{0}\leq m_{0}^{*},
        \end{aligned}
\right.
\end{align}
in which $m_0^*$ is a positive integer.
\end{proposition}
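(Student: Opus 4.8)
The plan is to argue directly from the closed forms for $\widebar{\Delta}_{\text{D}}^{\text{M/U/1/D}}$, $\widebar{\Delta}_{\text{D}}^{\text{M/M/1/D}}$ and $\widebar{\Delta}_{\text{D}}^{\text{M/D/1/D}}$ in \textit{Corollary}~\ref{cor:MG1D}: form the three pairwise differences and track their signs as functions of the integer $m_0\geq 1$, with $\rho=\lambda/\mu\in(0,1)$ held fixed. The first simplification is that the term $\frac{(1-\rho)(1+u_0)}{2\mu m_0(1-u_0)}$, with $u_0=e^{-\rho/m_0}$, occurs identically in all three expressions, so it cancels in every difference. What remains in a difference is a part free of $m_0$ (containing the transcendental pieces $\frac{2e^{2\rho}(1-\rho)}{\mu(e^{2\rho}-1)}$ or $\frac{e^{\rho}(1-\rho)}{\mu\rho}$ inherited from the uniform or the deterministic formula), possibly the term $\frac{\rho(1+\omega_0)}{2\mu m_0(1-\omega_0)}$ with $\omega_0=e^{-1/m_0}$ (present only for M/M/1/D), and a rational function of $m_0$ that, after regrouping by powers of $m_0$ and using $\rho^2-\rho=-\rho(1-\rho)$, reduces to an $m_0$-free constant plus $\frac{\rho}{12\mu m_0^2}+\frac{1}{2\mu(1-\rho)m_0}$.

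First I would settle the two inequalities common to both cases, $\widebar{\Delta}_{\text{D}}^{\text{M/D/1/D}}<\widebar{\Delta}_{\text{D}}^{\text{M/U/1/D}}$ and $\widebar{\Delta}_{\text{D}}^{\text{M/D/1/D}}<\widebar{\Delta}_{\text{D}}^{\text{M/M/1/D}}$, i.e. that the deterministic service is always the best of the three. After the cancellation, $\widebar{\Delta}_{\text{D}}^{\text{M/U/1/D}}-\widebar{\Delta}_{\text{D}}^{\text{M/D/1/D}}$ equals an $m_0$-free constant plus $\frac{\rho}{12\mu m_0^2}+\frac{1}{2\mu(1-\rho)m_0}$, hence is strictly decreasing in $m_0$, and $\widebar{\Delta}_{\text{D}}^{\text{M/M/1/D}}-\widebar{\Delta}_{\text{D}}^{\text{M/D/1/D}}$ equals a constant plus $\frac{\rho(1+\omega_0)}{2\mu m_0(1-\omega_0)}$, which is also decreasing in $m_0$ because $x\mapsto x\,\frac{1+e^{-x}}{1-e^{-x}}$ is increasing on $(0,1]$. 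By \textit{Proposition}~\ref{MG1Dj} each average AuD converges, as $m_0\to\infty$, to its M/G/1/M counterpart, and by \textit{Theorem}~\ref{MGj} we have $\widebar{\Delta}_{\text{D}}^{\text{M/D/1/M}}<\widebar{\Delta}_{\text{D}}^{\text{M/U/1/M}}$ and $\widebar{\Delta}_{\text{D}}^{\text{M/D/1/M}}<\widebar{\Delta}_{\text{D}}^{\text{M/M/1/M}}$; hence both differences have a positive limit, and being decreasing in $m_0$ they stay positive for every finite $m_0\geq 1$.

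The heart of the argument is the sign of $g(m_0):=\widebar{\Delta}_{\text{D}}^{\text{M/M/1/D}}-\widebar{\Delta}_{\text{D}}^{\text{M/U/1/D}}$, which by the previous bookkeeping equals $g(\infty)+\frac{\rho}{2\mu}\big(\frac{1+\omega_0}{m_0(1-\omega_0)}-2\big)-\frac{\rho}{12\mu m_0^2}-\frac{1}{2\mu(1-\rho)m_0}$, where $g(\infty)=\widebar{\Delta}_{\text{D}}^{\text{M/M/1/M}}-\widebar{\Delta}_{\text{D}}^{\text{M/U/1/M}}>0$ by \textit{Theorem}~\ref{MGj}. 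Using $\frac{1+\omega_0}{m_0(1-\omega_0)}=2+\frac{1}{6m_0^2}+O(m_0^{-4})$, the first two correction terms cancel to leading order, so $g(m_0)=g(\infty)-\frac{1}{2\mu(1-\rho)m_0}+O(m_0^{-4})$. I would then show: (i) $g(m_0)\to g(\infty)>0$ as $m_0\to\infty$; (ii) $g(1)<0$ for every $\rho\in(0,1)$ — the explicit inequality obtained by putting $m_0=1$, $\omega_0=e^{-1}$ into the displayed formula; and (iii) $g$ is monotone increasing in $m_0$ on $[1,\infty)$ for each fixed $\rho$, which follows from $g'(m_0)=\frac{\rho}{2\mu}\frac{\text{d}}{\text{d}m_0}\big[\frac{1+\omega_0}{m_0(1-\omega_0)}\big]+\frac{\rho}{6\mu m_0^3}+\frac{1}{2\mu(1-\rho)m_0^2}$, where the first two terms again nearly cancel and the residual is dominated by the positive $\frac{1}{2\mu(1-\rho)m_0^2}$. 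From (i)--(iii), $g$ has a unique zero $m_*(\rho)\in(1,\infty)$; setting $m_0^*:=\lfloor m_*(\rho)\rfloor$ yields a positive integer with $g(m_0)\leq 0$ for $m_0\leq m_0^*$ and $g(m_0)>0$ for $m_0>m_0^*$. Together with the first step this is precisely the two-case dichotomy claimed.

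The main obstacle is item (iii): turning the informal ``leading-order cancellation'' into a rigorous sign statement for $g'(m_0)$ valid for all $m_0\geq 1$ (i.e. global, not merely eventual, monotonicity), uniformly in $\rho$, since $g'$ mixes the derivative of $\frac{1+\omega_0}{m_0(1-\omega_0)}$ with rational functions of $m_0$ and depends on $\rho$ also through $e^{2\rho}/(e^{2\rho}-1)$. If a clean bound on that derivative is hard to obtain, a workable fallback is to split the range: beyond an explicit bound on $m_0$ the expansion above forces $g'>0$, while for the finitely many smaller integers one evaluates $g$ directly and checks $g<0$ up to the crossing and $g>0$ afterwards. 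A secondary point is that $m_*(\rho)$, and hence $m_0^*$, may depend on $\rho$; since the statement only asks for a positive integer it suffices to exhibit $\lfloor m_*(\rho)\rfloor$, which is $\geq 1$ exactly because $g(1)<0<g(\infty)$ — making the verification of $g(1)<0$ for all $\rho\in(0,1)$ the other unavoidable, if elementary, computation.
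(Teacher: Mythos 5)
Your overall route is the same as the paper's: form the pairwise differences of the closed forms in Corollary~\ref{cor:MG1D} (the common term $\frac{(1-\rho)(1+u_0)}{2\mu m_0(1-u_0)}$ indeed cancels), show each difference is monotone in $m_0$, and fix the signs at $m_0=1$ and $m_0\to\infty$ to locate the crossover $m_0^*$. Your handling of the ``deterministic service is best'' part is in fact slightly more complete than the paper's own write-up: by showing $\widebar{\Delta}_{\text{D}}^{\text{M/M/1/D}}-\widebar{\Delta}_{\text{D}}^{\text{M/D/1/D}}$ is a constant plus $\frac{\rho(1+\omega_0)}{2\mu m_0(1-\omega_0)}$, hence decreasing, with a positive limit identified through Proposition~\ref{MG1Dj} and Theorem~\ref{MGj}, you obtain $\widebar{\Delta}_{\text{D}}^{\text{M/D/1/D}}<\widebar{\Delta}_{\text{D}}^{\text{M/M/1/D}}$ for every $m_0$, a chain needed in the case $m_0\leq m_0^*$ that the paper's proof (which only treats $\widebar{\Delta}_{\text{D}}^{\text{M/U/1/D}}-\widebar{\Delta}_{\text{D}}^{\text{M/D/1/D}}$ and the U-vs-M crossover) leaves unaddressed.

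There is, however, a genuine gap at precisely the step you flag as the main obstacle: the global monotonicity of $g(m_0)=\widebar{\Delta}_{\text{D}}^{\text{M/M/1/D}}-\widebar{\Delta}_{\text{D}}^{\text{M/U/1/D}}$ is only argued by a leading-order expansion plus the unproven claim that the residual of $g'$ is ``dominated by'' $\frac{1}{2\mu(1-\rho)m_0^2}$, and your fallback does not close it: the ``finitely many smaller integers'' would still have to be checked for every $\rho\in(0,1)$, and the threshold beyond which the expansion forces $g'>0$ must be uniform in $\rho$, which is exactly what is missing. The paper removes this difficulty exactly rather than asymptotically: substituting $t=-1/m_0\in[-1,0)$ and using the identity $e^{2t}-1-2te^{t}=2e^{t}(e^{t}-t-1)-(1-e^{t})^{2}$, the derivative of the difference becomes $\frac{\rho^{2}-\rho+1}{2\mu(1-\rho)}+\frac{\rho e^{t}(e^{t}-t-1)}{\mu(1-e^{t})^{2}}-\frac{\rho t}{6\mu}$, in which every term is manifestly positive for $\rho\in(0,1)$ and $t<0$; the near-cancellation you worried about is absorbed into the strictly positive constant $\frac{\rho^{2}-\rho+1}{2\mu(1-\rho)}$, giving monotonicity for all $m_0\geq 1$ uniformly in $\rho$. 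With that identity your steps (i)--(iii) go through; note also that, like the paper, you still owe the endpoint verification $g(1)<0$ for all $\rho\in(0,1)$ (you flag it but do not carry it out), on which the claim $m_0^*\geq 1$ rests.
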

\begin{proof}
See  Appendix~\ref{proof:MG1Dbijiao}.
\end{proof}

We also have the following results on the missing probability of M/G/1/D update-and-decide systems.
\begin{proposition} \label{prop:MG1Dpmis}
In M/U/1/D, M/M/1/D, and M/D/1/D update -and-decide systems with the common arrival rate $\lambda$, service rate $\mu$, and periodic decisions with rate $\nu=m_{0}\mu$, the missing probability of updates are, respectively, given by
\begin{align}\label{MG1DPmis}
p_{\text{mis}}^{\text{M/U/1/D}}&=\frac{1}{8m_{0}}+\frac{(1-\rho)(m_{0}(1-u_{0})-\rho)}{4\rho^{2}},\nonumber \\
p_{\text{mis}}^{\text{M/M/1/D}}&=\frac{1}{2}+\frac{m_{0}(u_{0}-1)}{2\rho},\nonumber \\
p_{\text{mis}}^{\text{M/D/1/D}}&=0. \nonumber
\end{align}
\end{proposition}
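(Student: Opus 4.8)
The plan is to run the renewal-type argument behind Theorem~\ref{pmis} once more, but now respecting the periodic placement of the decision epochs under the uniform approximation adopted throughout Section~\ref{sec:MG1D}. As already observed there, an update is missed by every decision exactly when no decision epoch falls in its reign $[t_{k-1}',t_k')$, whose length is $Y_k$, so that $p_{\text{mis}}=\Pr\{N_k=0\}$. The first step is to split on the state of the queue seen by the departure of update $k-1$ (the split that already underlies Theorem~\ref{th:MG1D} through the weights $\rho$ and $1-\rho$): with probability $\rho$ the queue is nonempty, i.e.\ $X_k\le T_{k-1}$, so $Y_k=S_k$ and the relevant count is $N_k^{1}$; with probability $1-\rho$ the queue is empty, i.e.\ $X_k>T_{k-1}$, and then, since the interarrival time $X_k$ is exponential and independent of $T_{k-1}$, the idle gap $X_k-T_{k-1}$ before update $k$ arrives is, by memorylessness, exponential with rate $\lambda$ and independent of $S_k$, so $Y_k=(X_k-T_{k-1})+S_k$. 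The missing event is $\{N_k^{1}=0\}$ in the first case and $\{N_k=0\}=\{N_k^{2}=0,\,N_k^{3}=0\}$ in the second.

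Second, I would feed in the distributional model for the decision counts that already supports Theorem~\ref{th:MG1D} (Appendix~\ref{proof:MG1D}): under the assumption that decision epochs fall approximately uniformly inside an inter-departure interval, an interval of length $y$ contains no decision epoch only if $y<1/\nu$, in which case this probability decreases affinely in $y$ to $0$ as $y\uparrow 1/\nu$; for $y\ge 1/\nu$ the interval is certain to contain a decision epoch. Applying this per-interval miss probability to $Y_k$ and averaging over its conditional law in each of the two cases reduces $p_{\text{mis}}$ to an expression of the form $\rho\,\mathbb{E}[h(S_k)]+(1-\rho)\,\mathbb{E}[h(A+S_k)]$, where $A$ is exponential with rate $\lambda$, independent of $S_k$, and $h$ is the affine-and-truncated function above; each expectation is then an elementary integral over $[0,1/\nu]$ against $f_{\text{S}}$, or against its convolution with an exponential density.

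Third, I would specialize the service law. For M/D/1/D the conclusion is immediate and does not even use the approximation: since $S_k\equiv 1/\mu$ and $\nu=m_0\mu$ with $m_0\ge 1$ an integer, every inter-departure interval satisfies $Y_k\ge S_k=1/\mu=m_0/\nu\ge 1/\nu$, hence contains at least one decision epoch, so $\Pr\{N_k=0\}=0$. For M/M/1/D, $S_k$ is exponential with rate $\mu$; the busy term $\rho\,\mathbb{E}[h(S_k)]$ is a single integral, and the idle term requires $\mathbb{E}[h(A+S_k)]$ with $A+S_k$ hypoexponential, obtained either from the density of the sum or by one extra integration over the idle gap. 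For M/U/1/D, $S_k$ is uniform on $(0,2/\mu)$, and since $1/\nu\le 1/\mu\le 2/\mu$ the truncation at $1/\nu$ stays inside the support of $S_k$, so the idle term is a double integral over the simplex $\{a+s\le 1/\nu\}$ with an (exponential)$\times$(uniform) weight, evaluating to an elementary function of $u_0=e^{-\lambda/\nu}$. Rewriting everything with $\rho=\lambda/\mu$ and $\nu=m_0\mu$ and collecting terms then gives the three closed forms; in particular, for M/M/1/D the $e^{-\mu/\nu}$ contributions produced by the busy and idle cases cancel, which is why only $u_0$ survives.

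I expect the idle-case bookkeeping for M/U/1/D to be the main obstacle: one has to carry the truncated-affine per-interval miss probability through the convolution of an exponential with a uniform density while keeping the domain $\{a+s\le 1/\nu\}$ inside the support of $S_k$ (valid precisely because $m_0\ge 1$), and then simplify the resulting exponential-polynomial expression without sign slips. Convenient end-of-proof checks are that all three expressions are nonnegative and tend to $0$ as $m_0\to\infty$, consistent with the decision rate driving the missing probability down and with $p_{\text{mis}}^{\text{M/D/1/D}}=0$ being the smallest of the three.
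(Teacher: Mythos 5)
Your overall skeleton does match the paper's proof in Appendix~I: the same busy/idle split with weights $\rho$ and $1-\rho$, the same use of memorylessness to make the idle gap $\mathrm{Exp}(\lambda)$ and independent of $S_k$, elementary integrals over $[0,1/\nu]$, and the same immediate argument that $p_{\text{mis}}^{\text{M/D/1/D}}=0$ since $Y_k\geq 1/\mu=m_0/\nu\geq 1/\nu$. The genuine gap is in your per-interval miss kernel. You take, for an interval of length $y<1/\nu$, miss probability $(1-\nu y)$ (the standard uniform-phase value, i.e.\ a single offset uniform on $(0,1/\nu)$). The paper instead introduces $d=m+n$, where $m$ is the gap from the last decision epoch to $t_{k-1}'$ and $n$ the gap from $t_k'$ to the next decision epoch, approximates $d\sim U(0,2/\nu)$ independently of $Y_k$, and evaluates $\Pr\{Y_k+d<1/\nu\}$; conditioned on $Y_k=y$ this kernel is $\tfrac12(1-\nu y)^{+}$, exactly half of yours. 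Carrying your kernel through the integrals you describe yields
\begin{align}
p^{\text{M/U/1/D}}=\frac{1}{4m_0}+\frac{(1-\rho)\bigl(m_0(1-u_0)-\rho\bigr)}{2\rho^{2}},
\qquad
p^{\text{M/M/1/D}}=1-\frac{m_0(1-u_0)}{\rho},
\end{align}
which are exactly twice the expressions in the proposition. So the final step of your plan (``collecting terms then gives the three closed forms'') fails: your route cannot land on the stated constants, and to reproduce them you would have to adopt the paper's $d\sim U(0,2/\nu)$ device, which is where the extra factor $\tfrac12$ in $p^{\text{M/U/1/D}}$ and $p^{\text{M/M/1/D}}$ comes from.

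Two side remarks. First, the cancellation you predicted is real: with your kernel the $\omega_0=e^{-\mu/\nu}$ contributions of the busy and idle cases do cancel in the M/M/1/D case, leaving only $u_0$. Second, your kernel is not an algebra slip on your part but a genuinely different (and in fact more standard) approximation: for M/M/1/D one can check it exactly, since by Burke's theorem the departure process is Poisson with rate $\lambda$, so the long-run missed fraction equals $1-\frac{\nu(1-u_0)}{\lambda}=1-\frac{m_0(1-u_0)}{\rho}$, which is your value rather than the proposition's. This means your argument, while internally coherent, proves a different formula than the one stated; as a proof of Proposition~\ref{prop:MG1Dpmis} as written it does not close, because the stated expressions are tied to the paper's specific $d$-based approximation of the decision-phase geometry.
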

\begin{proof}
See  Appendix~\ref{proof:MG1DPmis}.
\end{proof}

From proposition \ref{prop:MG1Dpmis}, it is clear that the missing probability is decreasing with $m_{0}$ in M/U/1/D and M/M/1/D systems, and the M/D/1/D system performs the best.

\begin{figure}[htp]
  \hspace{-6 mm}
  \begin{tabular}{cc}
  \subfigure[Average AuD versus arrival rate $\lambda$, where $\mu=1.5$, $m_{0}=20$]
  {
  \begin{minipage}[t]{0.5\textwidth}
  \centering
  {\includegraphics[width = 3.5in] {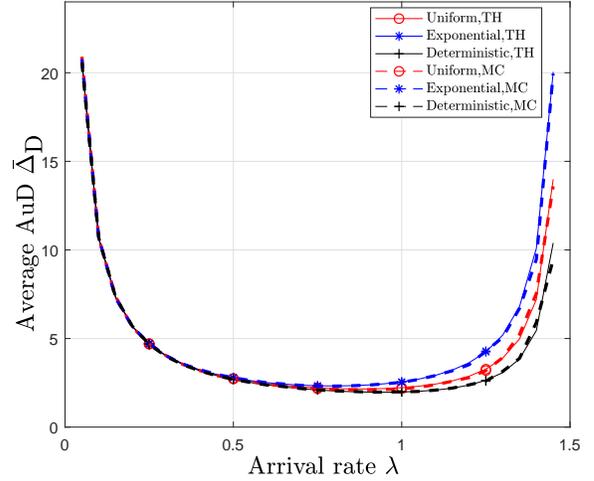} \label{fig:MG1D}}
  \end{minipage}
  }\\

  \subfigure[Average AuD versus service rate $\mu$, where $\lambda=0.5$, $m_{0}=20$]
  {
  \begin{minipage}[t]{0.5\textwidth}
  \centering
  {\includegraphics[width = 3.5in] {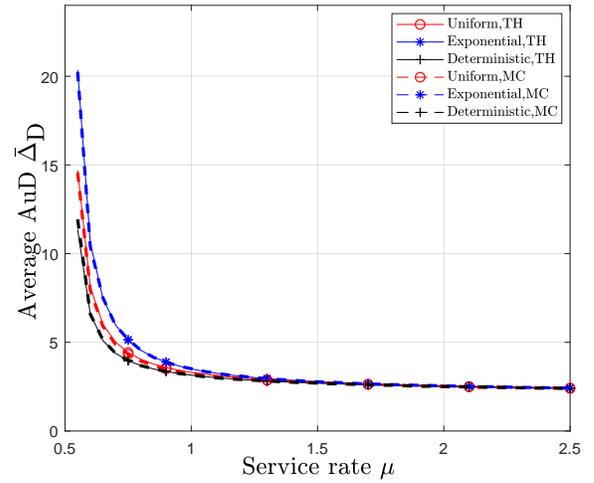} \label{fig:MG1DMU}}
  \end{minipage}
  }
  \end{tabular}
\caption{Average AuD of the systems} \label{fig:aud}
\end{figure}

\begin{figure}[!t]
  \centering
  \includegraphics[width=3.5in]{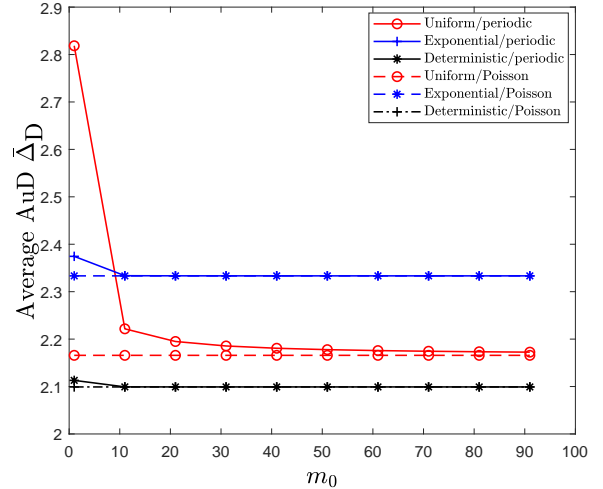}\\
  \caption{Average AuD versus decision rate}
  \label{fig:mMG1D}
\end{figure}

\begin{figure}[!t]
  \centering
  \includegraphics[width=3.5in]{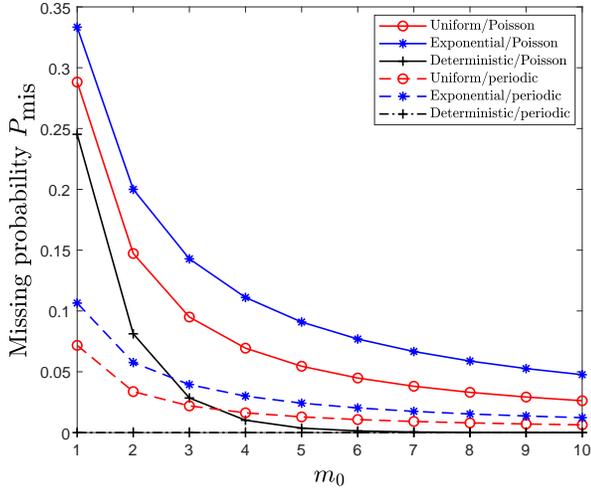}\\
  \caption{Missing probability $P_{mis}$ versus decision rate}
  \label{fig:MG1DPmis}
\end{figure}

\subsection{Simulation Results}\label{sec:MG1Dsimulation}
In this subsection, we investigate the performance of update-and-decide systems with periodic decisions through numerical results and Monte Carlo simulations.

Fig.  \ref{fig:MG1D}    plots how the average AuDs of M/U/1/D, M/M/1/D and M/D/1/D systems vary with the common arrival rate $\lambda$, in which the service rate is set to $\mu=1.5$ and the decision rate is set to $\nu=30$, i.e., $m_0=20$.
    We observe that the average AuD is large when $\lambda$ is either small or relatively large, since the waiting time for a new update or crowded queue is large, respectively.
Fig. \ref{fig:MG1DMU} presents how the average AuD changes with service rate $\mu$, in which the arrival rate is set to $\lambda=0.5$ and the decision rate is also set to $\nu=30$.
    We observe that for each system, the average AuD decreases as the service rate $\mu$ is increased.
Due to the approximation used in the calculation of the average AuD (cf. \eqref{eq:deltaMG1D}), we observe that the theoretical results for M//U/1/D systems (the dashed curve with $\circ$) slightly deviate from the corresponding Monte Carlo results in Fig.  \ref{fig:MG1D}.
    Also, it is observed that the deviation vanishes as $m_{0}$ goes to infinity.

In Figs.  \ref{fig:mMG1D} and \ref{fig:MG1DPmis}, we investigate the performance comparisons between systems using Poisson decisions and systems with periodic decisions.
    Particularly, we set the arrival rate to $\lambda=0.75$ and the service rate to $\mu=1.5$.
In Fig.  \ref{fig:mMG1D}, we observe that the average AuD of systems with Poisson decisions are larger than that of corresponding systems with periodic decisions when $m_0$ is relatively small.
    As $m_{0}$ is increased to be relatively large, however, Poisson decisions and periodic decisions  yield the same average AuDs.
Also, it is seen that with periodic decisions, the system using periodic arrivals has the smallest average AuD, and the system with uniform arrivals performs better than the system with exponential arrivals in most cases, except the cases when $m_0$ is relatively small.
    Fig.  \ref{fig:MG1DPmis} presents the missing probabilities of the M/U/1/D and the M/M/1/D systems.
As is shown, the missing probabilities are also decreasing with $m_{0}$.
    It is also observed that for a system with Poisson arrivals, the periodic decision process outperforms other random decision processes in terms of missing probability.

\section{Conclusion and Decisions}\label{sec:conclusion}
In this paper, we investigated the timeliness and the utilization of the received updates in an IoT-based update-and-decide system, in which a Poisson arrival process and a general service process are used.
    In particular, we are interested in whether the random decision process outperforms the deterministic decision process and what kind of service process minimizes the average AuD of update-and-decide systems.
We showed that when the arrival rate $\lambda$ is small or relatively large, the average AuD will be large;
    when service rate $\mu$ is increased, the average AuD will be reduced substantially.
Also, we showed that the deterministic service process outperforms the uniform service process, which further outperforms the exponential service process, for both systems using the Poisson decision process and the periodic decision process.
\begin{figure}
  \centering
  \includegraphics[width=3.9in]{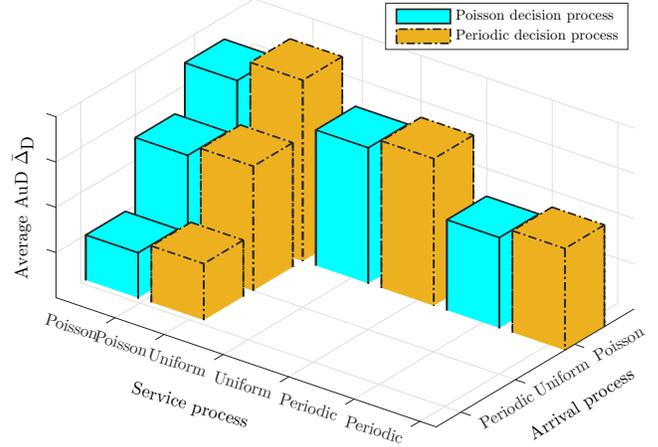}\\
  \caption{Comparison of average AuDs.}\label{fig:zhuxingtu}
\end{figure}


By combing the results of this paper and the results obtained in \cite{Dong-IoT-2020}, we present the comparisons of the average AuDs of several update-and-decide systems in Fig. \ref{fig:zhuxingtu} and Table \ref{tb:aud_comprison}.
    In terms of average AuD, it is concluded that
    \begin{itemize}
      \item the periodic (deterministic) decision process underperforms the Poisson (random) process, as shown in Proposition \ref{MG1Dj}.
            As the decision rate is sufficiently large, periodic and Poisson decision processes perform equally well, as shown in Proposition \ref{MG1Dj}.
         Intuitively,  the randomness in the decision process can provide some flexibility for monitor to timely using some of the recently received updates;
      \item the periodic service process outperforms random service processes (e.g., the uniform/Poisson process), as shown in Theorem \ref{MGj} and Proposition \ref{prop:MG1Dbijiao}.
            Also, the periodic arrival process outperforms random arrival processes \cite{Dong-IoT-2020}.
          Note that if the arrivals/services are less random, the uncertainty in the update reception process at monitor would also be reduced, which is beneficial for the timeliness of decision makings;
      \item a Poisson process limits the timeliness of the system more if it is the arrival process than it is the service process.
            For example, a system using the Poisson services performs much better than a system using Poisson arrivals in most cases, as shown in the 4th/5th rows and 2th/3th rows in Table \ref{tb:aud_comprison}, respectively.
    \end{itemize}

\vspace{3mm}
\begin{table}[htbp] 
\centering
    \caption{The average AuDs of IoT-based update-and-decide systems, $\lambda=0.75, \mu=1.5, \nu=15$.}\label{tb:aud_comprison}
{\small
\begin{tabular}{|c|c|c|c|c|c|}
\bottomrule
\multicolumn{2}{|c|}{\diagbox{}{Sv./Arv.$^{\rm a}$}}& Poisson &Uniform & Periodic & \\
\hline
\multirow{2}{*}{\tabincell{c}{Poisson\\Arrivals} }
        &\tabincell{c}{Poisson\\Decisions}&2.3333 & 2.1658& 2.0091 &smaller\\
        \cline{2-5}  
        &\tabincell{c}{Perodic\\Decisions}&2.3337& 2.2640& 2.0993 & larger\\
\hline
\multirow{2}{*}{\tabincell{c}{Poisson\\Services}}
        &\tabincell{c}{Poisson\\Decisions} &2.3333 & 1.7870  & 1.5028 & smaller\\
        \cline{2-5}  
        &\tabincell{c}{Perodic\\Decisions}&2.3336& 1.7892& 1.5037& larger \\
\hline
       \multicolumn{2}{|c|}{}&largest &middle &smallest&\\
\bottomrule
\end{tabular}
}
\begin{tablenotes}
        \footnotesize
        \item{$^{\rm a}$}Specifying the service process for the second and the third row; specifying the arrival process for the fourth and the fifth row.
      \end{tablenotes}
\end{table}

\appendix
\subsection{Proof of \textit{Theorem} \ref{th:MG1}}\label{prop:MG1}
\begin{proof} \label{proof:MG1}
From \cite[Chap. 14, pp. 522]{Stewart-2019}, the MGF of the system time $T_k$ can be given by
\begin{align}
G_\text{T}(s)=\frac{-s(1-\rho)G_\text{S}(s)}{-s-\lambda+\lambda G_\text{S}(s)}.
\end{align}

Since the inter-departure time $Y_{k}$ can be rewritten by
\begin{align}  \label{eq:yk}
         Y_k=\left\{
                   \begin{aligned}
                   &S_k,        &\text{if}~ X_k \leq T_{k-1} , \\
                   &X_k + S_k - T_{k-1} ,  &\text{if}~ X_k > T_{k-1},
                   \end{aligned}
   \right.
\end{align}
the probability of $X_{k}\leq T_{k-1}$ can then be given by
\begin{align} \label{eq:tkxk}
 \Pr\{X_{k}\leq T_{k-1}\}&=\int_{0}^{\infty}f_{\text{T}}(t)dt\int_{0}^{t}f_{\text{X}}(x)dx \nonumber \nonumber \\
                      &=\int_{0}^{\infty}f_{\text{T}}(t)(1-e^{-\lambda t})dt \nonumber \\
                      &=\rho.
\end{align}

By taking the expectation over $Y_{k}$, we have
\begin{align}
\mathbb{E}[Y_{k}]&=\rho \mathbb{E}[Y_{k}|X_{k}\leq T_{k-1}]+(1-\rho)\mathbb{E}[Y_{k}|X_{k}>T_{k-1}]\nonumber \\
        &=(1-\rho)\mathbb{E}[X_{k}-T_{k-1}|X_{k}>T_{k-1}]+\mathbb{E}[S_{k}].
\end{align}

Moreover, the expectation of $X_{k}-T_{k-1}$ conditioned on $X_{k}>T_{k-1}$ is given by
\begin{align}
&\mathbb{E}[X_{k}-T_{k-1}|X_{k}>T_{k-1}]\nonumber \\
&=\frac{1}{1-\rho}\int_{0}^{\infty}f_{\text{T}}(t)dt\int_{t}^{\infty}(x-t)f_{\text{X}}(x)dx\nonumber \\
                                  &=\frac{1}{1-\rho}\int_{0}^{\infty}\frac{1}{\lambda}f_{\text{T}}(t)e^{-\lambda t}dt\nonumber \\
                                  &=\frac{1}{\lambda(1-\rho)}G_{\text{T}}(s)|_{s=-\lambda}\nonumber \\
                                  &=\frac{1}{\lambda}.
\end{align}

Hence, we have
\begin{align}\label{eq:Eyk}
\mathbb{E}[Y_{k}]&=(1-\rho)\mathbb{E}[X_{k}-T_{k-1}\mid X_{k}>T_{k-1}]+\mathbb{E}[S_{k}]\nonumber \\
        &=\frac{1-\rho}{\lambda}+\frac{1}{\mu}\nonumber \\
        &=\frac{1}{\lambda}.
\end{align}

Likewise, we have
\begin{align}
&\mathbb{E}[(X_{k}-T_{k-1})^{2}|X_{k}>T_{k-1}]\nonumber \\ &=\frac{1}{1-\rho}\int_{0}^{\infty}f_{\text{T}}(t)dt\int_{t}^{\infty}(x-t)^{2}f_{\text{X}}(x)dx\nonumber \\
&=\frac{1}{1-\rho}\int_{0}^{\infty}\frac{2}{\lambda^{2}}f_{\text{T}}(t)e^{-\lambda t}dt\nonumber \\
&=\frac{2}{\lambda^{2}}
\end{align}
and
\begin{align}\label{eq:Eyk2}
\mathbb{E}[Y_{k}^{2}]=&\rho \mathbb{E}[Y_{k}^{2}|X_{k}\leq T_{k-1}]+(1-\rho)\mathbb{E}[Y_{k}^{2}|X_{k}>T_{k-1}]\nonumber \\
=&\mathbb{E}[S_{k}^{2}]+(1-\rho)\mathbb{E}[(X_{k}-T_{k-1})^{2}|X_{k}>T_{k-1}]\nonumber \\
            &+2(1-\rho)\mathbb{E}{S_{k}}\mathbb{E}[X_{k}-T_{k-1}|X_{k}>T_{k-1}]\nonumber \\
=&\mathbb{E}[S_{k}^{2}]+(1-\rho)(\frac{2}{\lambda^{2}}+\frac{2}{\lambda\mu})\nonumber \\
=&\mathbb{E}[S_{k}^{2}]+\frac{2}{\lambda^{2}}-\frac{2}{\mu^{2}}.
\end{align}

We also have
\begin{align}
&\mathbb{E}[T_{k-1}X_{k}-T_{k-1}^{2}|X_{k}>T_{k-1}]\nonumber \\
&=\frac{1}{1-\rho}\int_{0}^{\infty}f_{\text{T}}(t)dt\int_{t}^{\infty}(xt-t^2)f_{\text{X}}(x)dx\nonumber \\
                                         &=-\frac{1}{\lambda(1-\rho)}\frac{dG_{\text{T}}(s)}{ds}|_{s=-\lambda}
\end{align}
and
\begin{align}\label{eq:Etkyk}
\mathbb{E}[T_{k-1}Y_{k}]=&\Pr\{X_{k}\leq T_{k-1}\} \mathbb{E}[T_{k-1}Y_{k}|X_{k}\leq T_{k-1}]\nonumber \\
&+\Pr\{X_{k}>T_{k-1}\}\mathbb{E}[T_{k-1}Y_{k}|X_{k}>T_{k-1}]\nonumber \\
=&\mathbb{E}[T_{k-1}]\mathbb{E}[S_{k}]-\frac{1}{\lambda}\frac{dG_{\text{T}}(s)}{ds}|_{s=-\lambda}\nonumber \\
=&\frac{1}{\mu^{2}}+\frac{\lambda \mathbb{E}[S_{k}^{2}]}{2(\mu-\lambda)}-\frac{1}{\lambda}\frac{dG_{\text{T}}(s)}{ds}|_{s=-\lambda}.
\end{align}

By combining \eqref{eq:deltaAUD}, \eqref{eq:Eyk}, \eqref{eq:Eyk2} and \eqref{eq:Etkyk}, the proof of \textit{Theorem} \ref{th:MG1} would be completed.
 \end{proof}

\subsection{Proof of \textit{Corollary} \ref{cor:MG1M}}\label{prop:MG1M}
\begin{proof}
For the uniformly distributed service time with parameter $\frac{2}{\mu}$, the pdf $f_{\text{S}}(x)$ is given by
\begin{align}\label{eq:SU}
f_{\text{SU}}(x)=\frac{\mu}{2}.
\end{align}

Moreover, we have
$\mathbb{E}[S_{\text{U}k}]=\frac{1}{\mu}$, $\mathbb{E}[S_{\text{U}k}^{2}]=\frac{4}{3\mu^{2}}$, $w=\frac{1-\rho}{\lambda}-\frac{2(1-\rho)(e^{2\rho})}{\mu(e^{2\rho}-1)}$, and $G_{\text{SU}}(-\lambda)=\frac{1-e^{-2\rho}}{2\rho}$.

Based on \eqref{eq:deltaAUD}, the average AuD of a system with uniformly distributed service time can thus be obtained as
\begin{align}\label{eq:DU}
\widebar{\Delta}_{\text{D}}^{\text{M/U/1/M}}=\frac{\rho(6\rho^{2}e^{2\rho}-13\rho e^{2\rho}+9e^{2\rho}+\rho-3)}{3\lambda(1-\rho)(e^{2\rho}-1)}.
\end{align}

When the service time is exponentially distributed with parameter $\mu$ $f_{\text{S}}(x)$, we have
\begin{align}\label{eq:SE}
f_{\text{SE}}(x)=\mu e^{-\mu x}.
\end{align}

In this case, we have
$\mathbb{E}[S_{\text{E}k}]=\frac{1}{\mu}$, $\mathbb{E}[S_{\text{E}k}^{2}]=\frac{2}{\mu^{2}}$, $w=\frac{\rho-1}{\mu}$, and $G_{\text{SE}}(-\lambda)=\frac{1}{\rho+1}$.

By using \eqref{eq:deltaAUD}, the corresponding average AuD  is given by
\begin{align}\label{eq:DE}
\widebar{\Delta}_{\text{D}}^{\text{M/M/1/M}}=\frac{\rho^{3}-\rho^{2}+1}{\lambda(1-\rho)}.
\end{align}

When the service time is equal to $\frac{1}{\mu}$ deterministically, the pdf $f_{\text{S}}(x)$ can be presented as
\begin{align}\label{eq:SD}
f_{\text{SD}}(x)=\delta\left(x-\frac{1}{\mu}\right),
\end{align}
in which $\delta(x)$ is the Dirichlet function.

For this case, we have
$\mathbb{E}[S_{\text{D}k}]=\frac{1}{\mu}$, $\mathbb{E}[S_{\text{D}k}^{2}]=\frac{1}{\mu^{2}}$, $w=\frac{(1-\rho)(1-e^{\rho})}{\lambda}$, $G_{\text{SD}}(-\lambda)=e^{-\rho}$, and
\begin{align}\label{eq:DD}
\widebar{\Delta}_{\text{D}}^{\text{M/D/1/M}}=\frac{\rho^{2}+2(1-\rho)(\rho+e^{\rho}-\rho e^{\rho})}{2\lambda(1-\rho)}.
\end{align}

Thus, the corollary is proved.
\end{proof}

\subsection{Proof of \textit{Theorem} \ref{MGj}}\label{proof:ued}
\begin{proof}
From Appendix~\ref{prop:MG1M}, the second moment of the service time with the uniform, exponential, and deterministic distribution are, respectively, given by $\mathbb{E}[S_{\text{U}k}^{2}]=\frac{4}{3\mu^{2}}$, $\mathbb{E}[S_{\text{E}k}^{2}]=\frac{2}{\mu^{2}}$, and $\mathbb{E}[S_{\text{D}k}^{2}]=\frac{1}{\mu^{2}}$.
    Thus, it can be seen that $\mathbb{E}[S_{\text{D}k}^{2}]>\mathbb{E}[S_{\text{U}k}^{2}]>\mathbb{E}[S_{\text{D}k}^{2}]$.

By defining the following auxiliary function,
\begin{align}
f_{1}(\rho)&=G_{\text{SE}}(-\lambda)-G_{\text{SU}}(-\lambda)\\
&=\frac{1}{\rho+1}-\frac{1-e^{-2\rho}}{2\rho}\nonumber \\
&=\frac{2\rho-(\rho+1)(1-e^{-2\rho})}{2\rho(\rho+1)},
\end{align}
we observe that $2\rho(\rho+1)>0$ if $0<\rho<1$.

    We further denote $g(\rho)=2\rho-(\rho+1)(1-e^{-2\rho})$, for which the derivatives over $\rho$ is $g'(\rho)=1-(2\rho+1)e^{-2\rho}$ and $g''(\rho)=4\rho e^{-2\rho}$.
Since $0<\rho<1$, we have $g''(\rho)>0$, which means that $g'(\rho)$ is monotonically increasing with $\rho$.
    In particular, we have  $0<g'(\rho)<1-3e^{-2}$, which means that $g(\rho)$  is monotonically increasing with $\rho$ and $0<g(\rho)<2e^{-2}$.
Therefore, we have $f_{1}(\rho)>0$, i.e., $G_{\text{SE}}(-\lambda)>G_{\text{SU}}(-\lambda)$.

Likewise, we defined the second auxiliary function as
\begin{align}
f_{2}(\rho)&=G_{\text{SU}}(-\lambda)-G_{\text{SD}}(-\lambda)\nonumber \\
&=\frac{1-e^{-2\rho}}{2\rho}-e^{-\rho}\nonumber \\
&=\frac{1-e^{-2\rho}-2\rho e^{-\rho}}{2\rho}
\end{align}
and denote $h(\rho)=1-e^{-2\rho}-2\rho e^{-\rho}$.
    We then  have $h'(\rho)=2e^{-\rho}(e^{-\rho}-1+\rho)$.
    Denote $l(\rho)=e^{-\rho}-1+\rho$, then we have $l'(x)=1-e^{-\rho}$.
It is clear that $0<l'(x)<1-e^{-1}$, which means that $h'(\rho)>0$.
    Therefore,  $h(\rho)$ is monotonically increasing with $\rho$ and satisfies $0<h(\rho)<\frac{e^{2}-1-2e}{e^{2}}$.
Hence, we have $f_{2}(\rho)>0$, i.e., $G_{\text{SU}}(-\lambda)>G_{\text{SD}}(-\lambda)$.

From equation \eqref{eq:deltaMG1}, it is observed that the average AuD would be decreased either if $E[S^{2}]$ is decreased or $w$ is increased.
    Therefore, we have $\widebar{\Delta}_{\text{D}}^{\text{M/D/1/M}}<\widebar{\Delta}_{\text{D}}^{\text{M/U/1/M}}<\widebar{\Delta}_{\text{D}}^{\text{M/M/1/M}}$.

This completes the proof of the \textit{Theorem} \ref{MGj}.
\end{proof}

  \begin{figure}[htp]
  \hspace{-6 mm}
  \begin{tabular}{cc}
  \subfigure[]
  {
  \begin{minipage}[t]{0.5\textwidth}
  \centering
  {\includegraphics[width = 2.0in] {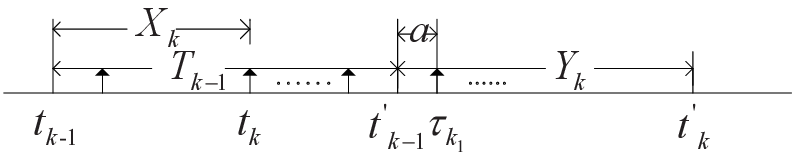} \label{fig:xiaoyu}}
  \end{minipage}
  }\\

  \subfigure[]
  {
  \begin{minipage}[t]{0.5\textwidth}
  \centering
  {\includegraphics[width = 2.0in] {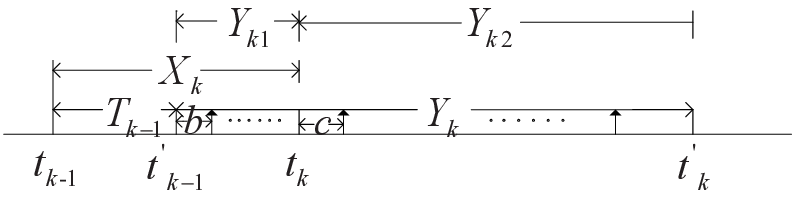} \label{fig:dayu}}
  \end{minipage}
  }
  \end{tabular}
\caption{Inter-arrival time and system time for systems} \label{fig:xktk}
\end{figure}

 \subsection{Proof of \textit{Theorem} \ref{pmis}}\label{proof:pmis}
 \begin{proof} \label{proof:pmis}
 The missing probability can be derived as follows,
 \begin{align}
 p_{\text{mis}}&=\mathbb{E}[e^{-\nu Y_{k}}]\nonumber \\
        &=\rho \mathbb{E}[e^{-\nu S_{k}}]+(1-\rho)\mathbb{E}[e^{-\nu(X_{k}-T_{k-1}+S_{k})}|X_{k}>T_{k-1}]\nonumber \\
        &=\mathbb{E}[e^{-\nu S_{k}}](\rho+\int_{0}^{\infty}f_{\text{T}}(t)dt\int_{t}^{\infty}e^{-\nu(x-t)}f_{\text{X}}(x)dx)\nonumber \\
        &=G_{\text{S}}(-\nu)\Big(\frac{\rho \nu+\lambda}{\lambda+\nu}\Big).
 \end{align}

 This completes the proof of the \textit{Theorem} \ref{pmis}.
\end{proof}

\subsection{Proof of \textit{Theorem} \ref{th:MG1D}}\label{proof:MG1D}
\begin{proof} \label{proof:MG1D}
Firstly, we consider the case of $X_{k}\leq T_{k-1}$, as shown in Fig. \ref{fig:xiaoyu}.
    Suppose $T_{k-1}$ consists of $j$ decision intervals, i.e., $\frac{j}{\nu}\leq T_{k-1}<\frac{j+1}{\nu}$.
We denote $\tau_{j}=T_{k-1}-\frac{j}{\nu}$ and $a_{j}=\frac{1}{\nu}-\tau_{j}$.
    In particular, we assume that $a$ is (approximately) uniformly distributed with parameter $1/\nu$, i.e.,
\begin{align}
f_{\text{a}}(x)=\nu, x\in\Big(0,\frac{1}{\nu}\Big).\nonumber
\end{align}

We denote the number of decisions made during $Y_{k}$ on condition $X_{k}\leq T_{k-1}$ as $N_{k}^{1}$.
    Since the AuD of $i$-th made can be written as
\begin{align}
\Delta_\text{D}(\tau_{k_{i}})=T_{k-1}+a+\frac{i-1}{\nu},   i=1, 2, \cdot\cdot\cdot , N_{k}^{1}, \nonumber
\end{align}
 the expected sum AuD during $Y_{k}$ would be
\begin{align}\label{eq:xiaoyu}
&\mathbb{E}[\Delta_\text{Dk}^{1}|X_{k}\leq T_{k-1}]\nonumber \\
&=\mathbb{E}[\Sigma_{i=1}^{N_{k}^{1}}\Delta_\text{D}(\tau_{k_{i}})|X_{k}\leq T_{k-1}]\nonumber \\
&=\mathbb{E}[T_{k-1}|X_{k}\leq T_{k-1}]\mathbb{E}[N_{k}^{1}]+\frac{\mathbb{E}[(N_{k}^{1})^{2}]}{2\nu}.
\end{align}

Secondly, we consider the case of $X_{k}>T_{k-1}$, as shown in Fig.  \ref{fig:dayu}.
    We denote the parts before and after the arrival of updates $k$ as $Y_{k1}$ and $Y_{k2}$, i.e., $Y_{k1}=X_{k}-T_{k-1}$ and $Y_{k2}=S_{k}$.
We denote the number of decisions made during $Y_{k1}$ and $Y_{k2}$ as $N_{k}^{2}$ and $N_{k}^{3}$.
    We approximately denote the length between arrival epoch $t_{k}$ and the next decision epoch after $t_{k}$ as $b$.
Since the inter-arrival time is exponentially distributed and the inter-decision time is deterministically distributed, $b$ would be uniformly distributed over $[0,\frac{1}{\nu}]$.
    The AuD of decisions made during $Y_{k}$ conditioned on $X_{k}>T_{k-1}$, therefore, can be written as
\begin{align}
\Delta_\text{D}(\tau_{k_{i}})=T_{k-1}+b+\frac{j-1}{\nu},   j=1, 2, \cdot\cdot\cdot , N_{k}^{2}+N_{k}^{3}.\nonumber
\end{align}
Furthermore, the expected sum AuD during $Y_{k}$ would be
\begin{align}\label{eq:dayu}
\mathbb{E}&[\Delta_\text{Dk}^{2}|X_{k}\leq T_{k-1}]=\mathbb{E}[\Sigma_{i=1}^{N_{k}^{2}+N_{k}^{3}}\Delta_\text{D}(\tau_{k_{i}})|X_{k}>T_{k-1}]\nonumber \\
=&\mathbb{E}[T_{k-1}|X_{k}>T_{k-1}](\mathbb{E}[N_{k}^{2}]+\mathbb{E}[N_{k}^{3}])\nonumber \\
&+\frac{\mathbb{E}[(N_{k}^{2})^2]+\mathbb{E}[(N_{k}^{3})^2]+2(\mathbb{E}[N_{k}^{2}]+\mathbb{E}[N_{k}^{3}])}{2\nu}.
\end{align}

Finally, suppose that $K$ updates are served and $N_{\text{T}}$ decisions are made during a period $T$, where $K_{1}$ decisions are made during inter-departure times with $X_{k}\leq T_{k-1}$ and $K_{2}$ decisions are made during inter-departure times with $X_{k}>T_{k-1}$.
As $T$ goes to infinity, we have
\begin{align}\label{eq:V}
\lim_{T\rightarrow\infty}\frac{N_{\text{T}}}{K}=\frac{\mathbb{E}[Y_{k}]}{\frac{1}{\nu}}=\frac{\nu}{\lambda}.
\end{align}

Combing the results in \eqref{eq:xiaoyu}, \eqref{eq:dayu} and \eqref{eq:V}, the average AuD of M/G/1/D update-and-decide system can be expressed as
\begin{align}\label{eq:average}
&\widebar{\Delta}_\text{D}^\text{M/G/1/D}=\lim_{T\rightarrow\infty}\frac{1}{N_{\text{T}}} \sum_{k=1}^{K} \Delta_{\text{Dk}}\nonumber \\
&=\lim_{T\rightarrow\infty}\frac{K}{N_{\text{T}}}\big(\frac{K_{1}}{K}\frac{1}{K_{1}}\sum_{k=1}^{K_{1}}\Delta_{\text{Dk}}^{1}+\frac{K_{2}}{K}\frac{1}{K_{2}}\sum_{k=1}^{K_{2}}\Delta_{\text{Dk}}^{2}\Big)\nonumber \\
&=\frac{\lambda}{\nu}\Big(\rho\mathbb{E}[\Delta_\text{Dk}^{1}|X_{k}\leq T_{k-1}]+(1-\rho)\mathbb{E}[\Delta_\text{Dk}^{2}|X_{k}> T_{k-1}]\Big)\nonumber \\
&=\frac{\lambda\rho}{\nu}\Big(\mathbb{E}[T_{k-1}|X_{k}\leq T_{k-1}]\mathbb{E}[N_{k}^{1}]+\frac{\mathbb{E}[(N_{k}^{1})^{2}]}{2\nu}\Big)\nonumber \\
&~~~~+\frac{\lambda(1-\rho)}{\nu}\Big(\mathbb{E}[T_{k-1}|X_{k}>T_{k-1}](\mathbb{E}[N_{k}^{2}]+\mathbb{E}[N_{k}^{3}])\nonumber \\
&~~~~+\frac{\mathbb{E}[(N_{k}^{2})^2]+\mathbb{E}[(N_{k}^{3})^2]
+2(\mathbb{E}[N_{k}^{2}]+\mathbb{E}[N_{k}^{3}])}{2\nu}\Big),
\end{align}
which complete the proof of the \textit{Theorem} \ref{th:MG1D}.
\end{proof}

\subsection{Proof of \textit{Proposition} \ref{pro:tk}}\label{proof:tk}
\begin{proof}
As shown in \eqref{eq:tkxk}, we have $\Pr\{X_{k}\leq T_{k-1}\}=\rho$.
    Thus, the expectations of $T_{k}$ conditioned on $X_{k}\leq T_{k-1}$ and $X_{k}>T_{k-1}$ can be expressed as
\begin{align}
\mathbb{E}[T_{k-1}|X_{k}\leq T_{k-1}]=&\frac{1}{\rho}\int_{0}^{\infty}f_{\text{X}}(x)dx\int_{x}^{\infty}tf_{\text{T}}(t)dt\nonumber \\
=&\frac{1}{\rho}\int_{0}^{\infty}f_{\text{T}}(t)dt\int_{0}^{t}tf_{\text{X}}(x)dx\nonumber \\
=&\frac{1}{\rho}\int_{0}^{\infty}f_{\text{T}}(t)(t-te^{-\lambda t})dt\nonumber \\
=&\frac{\mathbb{E}[T_{k}]+w}{\rho}.
\end{align}
Likewise, we have
\begin{align}
\mathbb{E}[T_{k-1}|X_{k}>T_{k-1}]=&\frac{1}{1-\rho}\int_{0}^{\infty}f_{\text{X}}(x)dx\int_{0}^{x}tf_{\text{T}}(t)dt\nonumber \\
=&\frac{1}{1-\rho}\int_{0}^{\infty}f_{\text{T}}(t)dt\int_{t}^{\infty}tf_{\text{X}}(x)dx\nonumber \\
=&\frac{1}{1-\rho}\int_{0}^{\infty}te^{-\lambda t}f_{\text{T}}(t)dt\nonumber \\
=&-\frac{\omega}{1-\rho},
\end{align}
in which $\omega=\frac{\text{d}G_\text{T}(s)}{\text{d}s}|_{s=-\lambda}$.

This completes the proof of the \textit{Proposition} \ref{pro:tk}.
\end{proof}

\subsection{Proof of \textit{Corollary} \ref{cor:MG1D}}\label{proof:Average}
\begin{proof}
We denote $\omega_{0}=e^{-\frac{\mu}{\nu}}$ and $u_{0}=^{-\frac{\lambda}{\nu}}$.

By using Pollaczek-Khinchine formula \cite[Chap. 8, pp. 382]{Wolff-1989} for the FCFS M/G/1 queue, we have
\begin{align}
\mathbb{E}[T_{k}]=\frac{1}{\mu}+\frac{\lambda\mathbb{E}[S^{2}]}{2(1-\rho)}.
\end{align}

If $f_{\text{S}}(x)$ is a uniform distribution with parameter $\frac{2}{\mu}$, e.g., $f_{\text{SU}}(x)=\frac{\mu}{2}$, we have
\begin{align}
\mathbb{E}[T_{\text{U}k}]=\frac{1}{\mu}+\frac{2\rho}{3\mu(1-\rho)}.\nonumber
\end{align}

We denote the number of decisions made during $Y_{k}$ conditioned on $X_{k}\leq T_{k-1}$ as $N_{\text{U}k}^{1}$ and have
\begin{align}\label{eq:NUk1}
\Pr\{N_{\text{U}k}^{1}=0\}=&\Pr\{Y_{k}<a|X_{k}\leq T_{k-1}\}\nonumber \\
=&\int_{0}^{\frac{1}{\nu}}f_{\text{a}}(x)dx\int_{0}^{x}f_{\text{S}}(y)dy\nonumber \\
=&\frac{1}{4m_{0}},\nonumber \\
\Pr\{N_{\text{U}k}^{1}=j\}=&\Pr\left\{\frac{j-1}{\nu}+a<Y_{k}<\frac{j}{\nu}+a|X_{k}\leq T_{k-1}\right\}\nonumber \\
=&\int_{0}^{\frac{1}{\nu}}f_{\text{a}}(x)dx\int_{\frac{j-1}{\nu}+x}^{\frac{j}{\nu}+x}f_{\text{S}}(y)dy\nonumber \\
=&\frac{1}{2m_{0}}, j=1, 2, \cdot\cdot\cdot ,\nonumber\\
\mathbb{E}[N_{\text{U}k}^{1}]=&\sum_{j=1}^{2m_{0}}j\Pr\{N_{\text{U}k}^{1}=j\}=\frac{2m_{0}+1}{2},\nonumber \\
\mathbb{E}[(N_{\text{U}k}^{1})^{2}]=&\frac{(2m_{0}+1)(2m_{0}+2)(4m_{0}+3)}{6}.
\end{align}

Since inter-arrival time $X_{k}$ is exponentially distributed, we have $\Pr\{X-T>y|X>T\}=\Pr\{X>y\}$, i.e.,
\begin{align}
f_{\text{X-T}}(x)=\lambda e^{-\lambda }.\nonumber
\end{align}

We denote the number of decisions made during $Y_{k1}$ and $Y_{k2}$ as $N_{\text{U}k}^{2}$ and $N_{\text{U}k}^{3}$ and have
\begin{align}\label{eq:NUk2}
\Pr\{N_{\text{U}k}^{2}=0\}&=\Pr\{Y_{k1}<b|X_{k}> T_{k-1}\}\nonumber \\
&=\Pr\{X_{k}-T_{k-1}<b|X_{k}> T_{k-1}\}\nonumber \\
&=\int_{0}^{\frac{1}{\nu}}f_{\text{b}}(x)dx\int_{0}^{x}f_{\text{X-T}}(y)dy\nonumber \\
&=1-\frac{\nu(1-u_{0})}{\lambda}\triangleq p_\text{Us},\nonumber \\
\Pr\{N_{\text{U}k}^{2}=j\}&=\Pr\Big\{\frac{j-1}{\nu}+b<Y_{k}<\frac{j}{\nu}+b|X_{k}> T_{k-1}\Big\}\nonumber \\
&=\int_{0}^{\frac{1}{\nu}}f_{\text{b}}(x)dx\int_{\frac{j-1}{\nu}+x}^{\frac{j}{\nu}+x}f_{\text{X-T}}(y)dy\nonumber \\
&=(1-p_\text{Us})(1-u_{0})u_{0}^{j-1}, j=1, 2, \cdot\cdot\cdot ,\nonumber \\
\mathbb{E}[N_{\text{U}k}^{2}]=&\frac{\nu}{\lambda},\mathbb{E}[(N_{\text{U}k}^{2})^{2}]
=\frac{\nu(1+u_{0})}{\lambda(1-u_{0})}.
\end{align}

We denote the length between departure epoch $t_{k-1}^{'}$ and the next decision epoch after $t_{k-1}^{'}$ as $c$.
    By approximating the distribution of $c$ with a uniform distribution with parameter $1/\nu$, i.e.,
\begin{align}
f_{\text{c}}(x)=\nu, x\in\Big(0,\frac{1}{\nu}\Big).\nonumber
\end{align}

We then have
\begin{align}
\Pr\{N_{\text{U}k}^{3}=0\}&=\Pr\{Y_{k2}<c|X_{k}> T_{k-1}\}\nonumber \\
&=\Pr\{S_{k}<c|X_{k}> T_{k-1}\}\nonumber \\
&=\int_{0}^{\frac{1}{\nu}}f_{\text{c}}(x)dx\int_{0}^{x}f_{\text{S}}(y)dy\nonumber \\
&=\frac{2m_{0}-1}{2},\nonumber \\
\Pr\{N_{\text{U}k}^{3}=j\}&=\Pr\Big\{\frac{j-1}{\nu}+c<Y_{k}<\frac{j}{\nu}+c|X_{k}> T_{k-1}\Big\}\nonumber \\
&=\int_{0}^{\frac{1}{\nu}}f_{\text{c}}(x)dx\int_{\frac{j-1}{\nu}+x}^{\frac{j}{\nu}+x}f_{\text{S}}(y)dy\nonumber \\
&=\frac{1}{2m_{0}}, j=1, 2, \cdot\cdot\cdot ,\nonumber
\end{align}
\begin{align}\label{eq:NUk3}
\mathbb{E}[N_{\text{U}k}^{3}]=&\mathbb{E}[N_{\text{U}k}^{1}]=\frac{2m_{0}+1}{2},\nonumber \\
\mathbb{E}[(N_{\text{U}k}^{3})^{2}]=&\frac{(2m_{0}+1)(2m_{0}+2)(4m_{0}+3)}{12m_{0}}.
\end{align}

By combining \eqref{eq:average}, \eqref{eq:NUk1}, \eqref{eq:NUk2} and \eqref{eq:NUk3}, the average AuD of an M/U/1/D update-and-decide system can be obtained as
\begin{align}
\widebar{\Delta}_\text{D}^\text{M/U/1/D}=&\frac{2e^{2\rho}(1-\rho)}{\mu(e^{2\rho}-1)}+\frac{(1-\rho)(1+u_{0})}{2\mu m_{0}(1-u_{0})}-\frac{1}{\mu\rho(1-\rho)}\nonumber \\
&-\frac{(2m_{0}^{2}+1)\rho^{2}+(16m_{0}^2-1)\rho-36m_{0}^{2}-6m_{0}}
{12m_{0}^{2}\mu(1-\rho)}.\nonumber
\end{align}

If $f_{\text{S}}(x)$ is an exponential distribution with parameter $\mu$, e.g, $f_{\text{SE}}(x)=\mu e^{-\mu x}$, we have
\begin{align}
\mathbb{E}[T_{\text{E}k}]=\frac{1}{\mu}+\frac{\rho}{\mu(1-\rho)}.
\end{align}

We denote the number of decisions made during $Y_{k}$ conditioned on $X_{k}\leq T_{k-1}$ as $N_{\text{E}k}^{1}$ and have
\begin{align}
\Pr\{N_{\text{E}k}^{1}=0\}&=\Pr\{Y_{k}<a|X_{k}\leq T_{k-1}\}\nonumber \\
&=\int_{0}^{\frac{1}{\nu}}f_{\text{a}}(x)dx\int_{0}^{x}f_{\text{S}}(y)dy\nonumber \\
&=1-\frac{\nu(1-\omega_{0})}{\mu}\triangleq p_\text{Es},\nonumber \\
\Pr\{N_{\text{E}k}^{1}=j\}&=\Pr\Big\{\frac{j-1}{\nu}+a<Y_{k}<\frac{j}{\nu}+a|X_{k}\leq T_{k-1}\Big\}\nonumber \\
&=\int_{0}^{\frac{1}{\nu}}f_{\text{a}}(x)dx\int_{\frac{j-1}{\nu}+x}^{\frac{j}{\nu}+x}f_{\text{S}}(y)dy\nonumber \\
&=(1-p_\text{Es})(1-\omega_{0})\omega_{0}^{j-1}, j=1, 2, \cdot\cdot\cdot ,\nonumber
\end{align}
\begin{align}\label{eq:NEk1}
\mathbb{E}[N_{\text{E}k}^{1}]=\frac{\nu}{\mu}, \mathbb{E}[(N_{\text{E}k}^{3})^2]=\frac{\nu(1+\omega_{0})}{\mu(1-\omega_{0})}.
\end{align}

We denote the number of decisions made during $Y_{k1}$ and $Y_{k2}$ as $N_{\text{E}k}^{2}$ and $N_{\text{E}k}^{3}$. We have
\begin{align}\label{eq:NEk2k3}
&\mathbb{E}[N_{\text{E}k}^{2}]=\mathbb{E}[N_{\text{U}k}^{2}]=\frac{\nu}{\lambda}, \mathbb{E}[(N_{\text{E}k}^{2})^2]=\frac{\nu(1+u_{0})}{\lambda(1-u_{0})},   \nonumber \\
&\mathbb{E}[N_{\text{E}k}^{3}]=\mathbb{E}[N_{\text{E}k}^{1}]=\frac{\nu}{\mu}, \mathbb{E}[(N_{\text{E}k}^{3})^2]=\frac{\nu(1+\omega_{0})}{\mu(1-\omega_{0})}.
\end{align}

By combining \eqref{eq:average}, \eqref{eq:NEk1} and \eqref{eq:NEk2k3}, the average AuD of an M/M/1/D update-and-decide system can be obtained as
\begin{align}
\widebar{\Delta}_\text{D}^\text{M/M/1/D}=\frac{2\rho^2-3\rho+2}{\mu(1-\rho)}+\frac{\rho(1+\omega_{0})}{2\mu m_{0}(1-\omega_{0})}+\frac{(1-\rho)(1+u_{0})}{2\mu m_{0}(1-u_{0})}.\nonumber
\end{align}
If $f_{\text{S}}(x)$ is a deterministic distribution with parameter $\frac{1}{\mu}$, e.g, $f_{\text{SD}}(x)=\delta\left(x-\frac{1}{\mu}\right)$, we have
\begin{align}
\mathbb{E}[T_{\text{D}k}]=\frac{1}{\mu}+\frac{\rho}{2\mu(1-\rho)}.
\end{align}

We further denote the number of decisions made during $Y_{k}$ conditioned on $X_{k}\leq T_{k-1}$ as $N_{\text{D}k}^{1}$.
    Since the service time is $S_{k}=\frac{1}{\mu}$, the inter-decision time is $Z_{j}=\frac{1}{\nu}$ in which $\nu=m_{0}\mu$, we have $N_{\text{D}k}^{1}=m_{0}$.

We also denote the number of decisions made during $Y_{k1}$ and $Y_{k2}$ as $N_{\text{E}k}^{2}$ and $N_{\text{E}k}^{3}$. We have
\begin{align}\label{eq:NDk123}
\mathbb{E}[N_{\text{D}k}^{2}]&=\mathbb{E}[N_{\text{U}k}^{2}]=\frac{\nu}{\lambda}, \mathbb{E}[(N_{\text{E}k}^{2})^2]=\frac{\nu(1+u_{0})}{\lambda(1-u_{0})}, \nonumber \\
N_{\text{D}k}^{3}&=N_{\text{D}k}^{1}=m_{0}.
\end{align}

By combining \eqref{eq:average} and \eqref{eq:NDk123}, the average AuD of an M/D/1/D update-and-decide system can be obtained as
\begin{align}
\widebar{\Delta}_\text{D}^\text{M/D/1/D}=&\frac{m_{0}e^{\rho}(1-\rho)}{\mu\rho^{2}}+\frac{(1-\rho)(1+u_{0})}{2\mu\rho(1-u_{0})} \nonumber \\
&+\frac{-3m_{0}+6m_{0}\rho-2m_{0}}{2\mu\rho^{2}(1-\rho)}.\nonumber
\end{align}

This completes the proof of \textit{Corollary} \ref{cor:MG1D}.
\end{proof}

\subsection{Proof of \textit{Proposition} \ref{prop:MG1Dbijiao}}\label{proof:MG1Dbijiao}
\begin{proof}
Note that the average AuDs of the M/U/1/D, M/M/1/D and M/D/1/D systems have been obtained in \textit{Corollary} \ref{cor:MG1D}.

Be defining the following auxiliary function
\begin{align}
f_{1}(m_{0})=&\widebar{\Delta}_\text{D}^\text{M/M/1/D}-\widebar{\Delta}_\text{D}^\text{M/U/1/D} \nonumber \\
=&\frac{\rho(1+\omega_{0})}{2\mu m_{0}(1-\omega_{0})}-\frac{1}{2m_{0}\mu(1-\rho)}-\frac{\rho}{12m_{0}^{2}\mu}  \nonumber \\
&-\frac{2e^{2\rho}(1-\rho)}{\mu(e^{2\rho}-1)}+\frac{13\rho^{3}-10\rho^{2}-6\rho+6}{6\mu\rho(1-\rho)}
\end{align}
and set $m_{0}=-\frac{1}{t}$, $-1\leq t<0$, the first order derivative $f'_1(m_{0})$ can be given by
\begin{align}
f_{1}^{'}(t)=&f_{1}^{'}\left(-\frac{1}{m_{0}}\right) \nonumber \\
\label{apx:h_1}
=&\frac{\rho^2-\rho+1}{2\mu(1-\rho)} + \frac{\rho e^{t}(e^{t}-t-1)}{\mu(1-e^{t})^2}-\frac{\rho t}{6\mu}.
\end{align}

Since we have $0<\rho<1$ and $-1\leq t<0$, it can be easily seen that each item of \eqref{apx:h_1} is positive, and thus $f_{1}^{'}(t)>0$.
    Since we also have $f_{1}(1)<0$ and $f_{1}(\infty)>0$, it is concluded that we can find such an $m_{0}^{*}$ that $f_{1}(m_{0}^{*}-1)\leq 0$ and $f_{1}(m_{0}^{*})>0$.
    That is, there exists an $m_0^*$ satisfying $\widebar{\Delta}_\text{D}^\text{M/M/1/D}< \widebar{\Delta}_\text{D}^\text{M/U/1/D}$ if  $m_{0}< m_{0}^{*}$ and $\widebar{\Delta}_\text{D}^\text{M/M/1/D}> \widebar{\Delta}_\text{D}^\text{M/U/1/D}$ if  $m_{0}\geq m_{0}^{*}$.

Likewise, by checking the derivative of 
\begin{align}
f_{2}(m_{0})=&\widebar{\Delta}_\text{D}^\text{M/U/1/D}-\widebar{\Delta}_\text{D}^\text{M/D/1/D} \nonumber \\
=&\frac{e^{\rho}(1-\rho)(2\rho e^{\rho}-e^{2\rho}+1)}{\mu\rho(e^{2\rho}-1)} \nonumber \\
&+\frac{\rho(2m_{0}^{2}+1)}{12m_{0}^{2}\mu}+\frac{1}{2m_{0}\mu(1-\rho)},
\end{align}
it can be seen that $f_{2}(m_{0})$ is decreasing with $m_{0}$ (since $m_0$ is in the denominators of the three items) and $f_{2}(\infty)>0$.
Thus, we have $\widebar{\Delta}_\text{D}^\text{M/U/1/D}>\widebar{\Delta}_\text{D}^\text{M/D/1/D}$ for all $m_0\geq1$.
Thus, the proof of \textit{Proposition} \ref{prop:MG1Dbijiao} is completed.

\end{proof}

\subsection{Proof of \textit{Proposition} \ref{prop:MG1Dpmis}}\label{proof:MG1DPmis}
\begin{proof}
The event that an update is missed to make any decision is equivalent to the event that the inter-departure time before the update is less than an inter-decision time.
As shown in \eqref{eq:yk}, we have $Y_{k}=S_{k}$ or $Y_{k}=X_{k}-T_{k-1}+S_{k}$.
However, when the service time is uniformly distributed or exponentially distributed, the update-and-decide system is randomly served.
Thus, the inter-departure time is randomly distributed.
\begin{figure}[!t]
  \centering
  \includegraphics[width=2.0in]{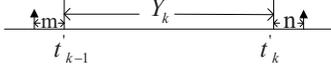}\\
  \caption{inter-departure time}
  \label{fig:deterpmis}
\end{figure}
We set $m$ to represent the length between departure epoch $t_{k-1}^{'}$ and the last decision epoch before $t_{k-1}^{'}$ and set $n$ to represent the length between departure epoch $t_{k}^{'}$ and the next decision epoch after $t_{k}^{'}$.
Approximate that $d=m+n$ is uniformly distributed with parameter $\frac{2}{\nu}$, i.e.,
\begin{align}
f_{\text{d}}(x)=\frac{\nu}{2}, x\in\Big(0,\frac{2}{\nu}\Big).\nonumber
\end{align}
Approximately, we have
\begin{align}
p_{\text{mis}}=&\Pr\{Y_{k}+d<\frac{1}{\nu}\}\nonumber \\
=&\Pr\{X_{k}\leq T_{k-1}\}\Pr\{Y_{k}<\frac{1}{\nu}|X_{k}\leq T_{k-1}\}\nonumber \\
&+\Pr\{X_{k}>T_{k-1}\}\Pr\{Y_{k}<\frac{1}{\nu}|X_{k}>T_{k-1}\}  \nonumber \\
=&\rho\Pr\{S_{k}+d<\frac{1}{\nu}\}  \nonumber \\
&+(1-\rho)\Pr\{X_{k}-T_{k-1}+S_{k}+d<\frac{1}{\nu}\}\nonumber \\
=&\rho\int_{0}^{\frac{1}{\nu}}f_{\text{S}}(x)dx\int_{0}^{\frac{1}{\nu}-x}f_{\text{d}}(z)dz+(1-\rho)\nonumber \\
&\cdot\int_{0}^{\frac{1}{\nu}}f_{\text{S}}(x)dx\int_{0}^{\frac{1}{\nu}-x}f_{\text{X-T}}(t)dt\int_{0}^{\frac{1}{\nu}-x-t}f_{\text{d}}(z)dz\nonumber \\
=&\int_{0}^{\frac{1}{\nu}}\frac{1}{2}f_{\text{S}}(x)(1-\nu x)dx \nonumber \\
&+(1-\rho)\int_{0}^{\frac{1}{\nu}}\frac{1}{2}f_{\text{S}}(x)(\frac{\nu u_{0}e^{\lambda x}}{\lambda}-\frac{\nu}{\lambda}),\nonumber \\
p_{\text{mis}}^{\text{M/U/1/D}}=&\frac{1}{8m_{0}}+\frac{(1-\rho)(m_{0}(1-u_{0})-\rho)}{4\rho^{2}},\nonumber \\
p_{\text{mis}}^{\text{M/M/1/D}}=&\frac{1}{2}+\frac{m_{0}(u_{0}-1)}{2\rho},
\end{align}
where, $u_{0}=e^{-\frac{\lambda}{\nu}}$.

In M/D/1/D update-and-decide systems, the inter-departure time $Y_{k}$ will consist no less than $m_{0}$ decision epochs.
Due to $m_{0}\geq1$, the missing probability $p_{\text{mis}}^{\text{M/D/1/D}}$ will always equal to 0, i.e., $p_{\text{mis}}^{\text{M/D/1/D}}=0$.

This completes the proof of the \textit{Proposition} \ref{prop:MG1Dpmis}.
\end{proof}

\end{document}